\newtheorem{definition}{Definition}
\newtheorem{lemma}[definition]{Lemma}
\newtheorem{theorem}[definition]{Theorem}
\newtheorem{corollary}[definition]{Corollary}
\def\Reals{\ensuremath{\mathbb{R}}}
\def\ZZ{\ensuremath{\mathbb{Z}}}
\def\A{\ensuremath{\mathcal{A}}}
\def\I{\ensuremath{\mathbb{G}}}
\def\cycle{\ensuremath{\tau}}
\def\true{{\rm T}}
\def\false{{\rm F}}
\newcommand\segment[1]{\ensuremath{\overline{#1}}}
\DeclareMathOperator{\lengthBIS}{len}
\newcommand\length{\lengthBIS_{\I}}
\DeclareMathOperator{\reverse}{reverse}
\DeclareMathOperator{\lca}{lca}
\DeclareMathOperator{\polylog}{polylog}
\begin{document}
\pagestyle{plain}

\title{Minimum cell connection and separation\\ in line segment arrangements\thanks{Part of the results contained here were presented at EuroCG 2011~\cite{acgk-11}. Research partially conducted at the 9th McGill - INRIA Barbados Workshop on Computational Geometry, 2010.}}

\author{Helmut Alt%
            \thanks{Institut f{\"ur} Informatik, Freie Universit{\"a}t Berlin,
    Takustra{\ss}e 9, D-14195 Berlin, Germany, {\tt \{alt, panos\}@mi.fu-berlin.de}}
    \footnote{Research supported by the German Science Foundation (DFG) under grant Kn~591/3-1.}
        \and
        Sergio Cabello\thanks{Department of Mathematics, IMFM and FMF, University of Ljubljana,
                Jadranska 19, SI-1000 Ljubljana, Slovenia, {\tt sergio.cabello@fmf.uni-lj.si.}}
       \footnote{Research was partially supported by the Slovenian Research Agency, program P1-0297.}
        \and
        Panos Giannopoulos\footnotemark[2]{\ }\footnotemark[3]
        \and
        Christian Knauer\thanks{Institut f{\"ur} Informatik, Universit{\"a}t Bayreuth
    Universit{\"a}tsstra{\ss}e 30
    D-95447 Bayreuth
    Germany, {\tt christian.knauer@uni-bayreuth.de}}{\ }\footnotemark[2]
}
%----------------------------------------------------------------------------------------
\date{}
\maketitle

\begin{abstract}
We study the complexity of the following cell connection and separation problems in segment arrangements.
Given a set of straight-line segments in the plane and two points $a$ and $b$ in different cells of the induced arrangement:
\begin{itemize}
\setlength{\itemsep}{-\parsep}
\item[(i)] compute the minimum number of segments one needs to remove so that there is a path connecting $a$ to $b$ that does not intersect any of the remaining segments;
\item[(ii)] compute the minimum number of segments one needs to remove so that the arrangement induced by the remaining segments has a single cell;
\item[(iii)] compute the minimum number of segments one needs to retain so that any path connecting $a$ to $b$ intersects some of the retained segments.
\end{itemize}

We show that problems (i) and (ii) are NP-hard and discuss some special, tractable cases. Most notably, we provide a linear-time algorithm for a variant of problem (i) where the path connecting $a$ to $b$ must stay inside a given polygon $P$ with a constant number of holes, the segments are contained in $P$, and the endpoints of the segments are on the boundary of $P$.
For problem (iii) we provide a cubic-time algorithm.
\end{abstract}

\section{Introduction}

In this paper we study the complexity of some natural optimization problems in segment arrangements. Let $S$ be a set of straight-line segments in $\Reals^2$, $\mathcal{A}(S)$ be the arrangement induced by $S$, and $a, b$ be two points not incident to any segment of $S$ and in different cells of $\mathcal{A}(S)$.

In the {\sc $2$-Cells-Connection} problem we want to compute a set of segments $S'\subseteq S$ of minimum cardinality with the property that $a$ and $b$ belong to the same cell of $\mathcal{A}(S\setminus S')$. 
In other words, we want to compute 
an $a$-$b$ path that \emph{crosses} the minimum number of segments of $S$ counted without multiplicities.
The \emph{cost} of a path is the total number of segments it crosses.

In the {\sc All-Cells-Connection} problem we want to compute a set $S'\subseteq S$ of minimum cardinality such that $\mathcal{A}(S\setminus S')$ consists of one cell only.

In the {\sc $2$-Cells-Separation} problem we want to compute a set $S'\subseteq S$ of minimum cardinality that \emph{separates} $a$ and $b$, i.e., $a$ and $b$ belong to different cells of $\mathcal{A}(S')$ -- equivalently -- any $a$-$b$ path intersects some segment of $S'$. 

Apart from being interesting in their own right, the problems we consider here are also natural abstractions of problems concerning sensor networks. Each segment is surveyed (covered) by a sensor, and the task is to find the minimum number of sensors of a given network over some domain that must be switched on or off so that: an intruder can be detected when walking between two given points ({\sc $2$-Cells-Separation}), or can walk freely between two given points ({\sc $2$-Cells-Connection}) or can reach freely any point ({\sc All-Cells-Connection}). Because of these applications, it is worth considering a variant where the segments lie inside a given polygon $P$ with holes and have their endpoints on the boundary of $P$, and the $a$-$b$ path must also stay inside $P$. See Fig.~\ref{fig:polygon} for an example of this last scenario. We refer to these variants as the restricted {\sc $2$-Cells-Separation} or {\sc $2$-Cells-Connection} in a polygon.

\medskip
\noindent
{\bf Our results.} We provide an algorithm that solves {\sc $2$-Cells-Separation} in $\mathcal{O}(n^2+nk)$ time, where $k$ is the number of pairs of segments that intersect. The same algorithm, with an extra logarithmic factor, works for a generalization where the segments are weighted. The algorithm itself is simple, but its correctness is not at all obvious. We justify its correctness by considering an appropriate set of cycles in the intersection graph and showing that it satisfies the so-called \emph{3-path condition}~\cite{t-egsnc-90} (see also \cite[Chapter 4]{mt-gs-01}). The use of the 3-path condition for solving {\sc $2$-Cells-Separation} is surprising and makes the connection to topology clear. 

We show that both {\sc $2$-Cells-Connection} and {\sc All-Cells-Connection} are NP-hard even when the segments are in general position. The first result is given by a careful reduction from {\sc Max-$2$-Sat}, which also implies APX-hardness. The second one follows from a straightforward reduction that uses a connection to the feedback vertex set problem in the intersection graph of the segments and holds even if there are no proper segment crossings. Also, when any three segments may intersect only at a common endpoint, {\sc $2$-Cells-Connection} is fixed-parameter tractable with respect to the number of proper segment crossings. 

Finally, we consider the restricted problems in a polygon. The restricted {\sc $2$-Cells-Separation} in a polygon is easily reduced to the general weighted version and thus can be solved efficiently. The restricted {\sc $2$-Cells-Connection} in a polygon remains NP-hard but can be solved in near-linear time for any fixed number of holes. The approach for this latter result uses homotopies to group the segments into clusters with the property that any cluster is either contained or disjoint from the optimal solution. 

\begin{figure}[t]
\centering
\includegraphics[width=0.47\textwidth]{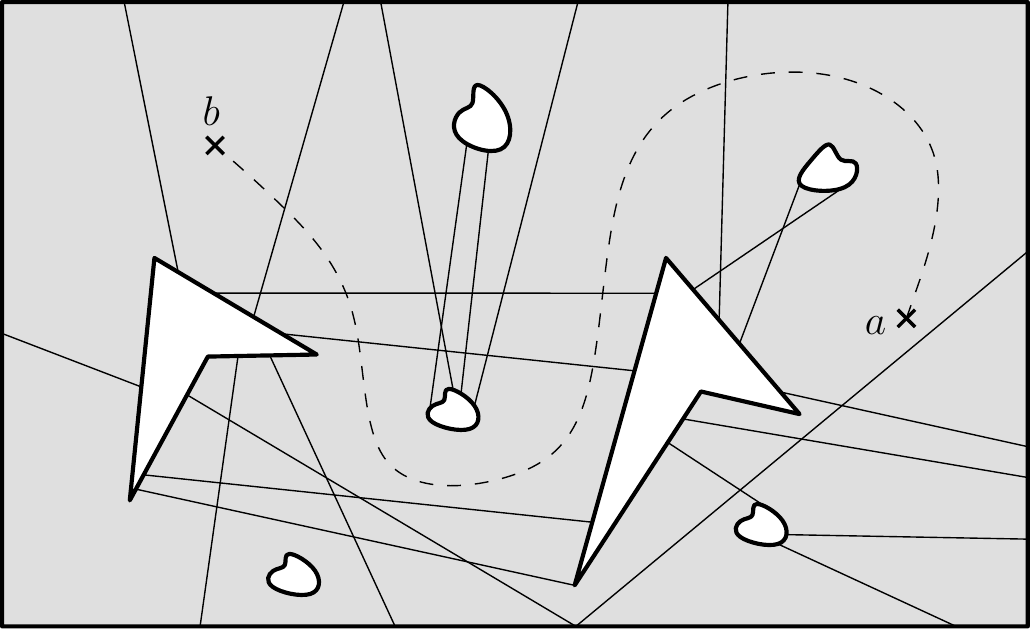}
\caption{A polygon with holes and a minimum-cost $a$-$b$ path.}
\label{fig:polygon}
\end{figure}

\medskip
\noindent
{\bf Related work.} Our NP-hardness proof for {\sc $2$-Cells-Connection} has been carefully extended by Kirkpatrick and Tseng~\cite{tseng-thesis}, who showed that the {\sc $2$-Cells-Connection} remains NP-hard even for \emph{unit-length} segments. However, their result does not imply APX-hardness for unit-length segments. 
The related problem of finding (from scratch) a set of segments with minimum total length that forms a barrier between two specified regions in a polygonal domain has been shown to be polynomial-time solvable by Kloder and Hutchinson~\cite{KH07}.

The problems we consider can of course be considered for other geometric objects, most notably unit disks. To this end, closely related work was done by Bereg and Kirkpatrick~\cite{BK09}, who studied the counterpart of {\sc $2$-Cells-Connection} in arrangements of unit disks and gave a $3$-approximation algorithm.  While the complexity of {\sc $2$-Cells-Connection} for unit (or arbitrary) disks is still unknown, there exist polynomial-time algorithms for restricted belt-shaped and simple polygonal domains~\cite{KLA07}. Simultaneously and independently to our work, Gibson et al.~\cite{gkv-ipud-11} have considered the problem of separating $k$ points in an arrangements of disks and provided a polynomial-time $O(1)$-approximation algorithm. Their approach is based on building a solution by considering several instances {\sc $2$-Cells-Separation} on arrangements of disks, which they can solve approximately.

\section{Separating two cells}

In this section we provide a polynomial-time algorithm for {\sc $2$-Cells-Separation}.
We will actually solve a weighted version, where we have a weight function $w$ assigning 
weight $w(s)\ge 0$ to each segment $s\in S$. For any subset $S'\subseteq S$ we define its weight
$w(S')$ as the sum of the weights over all segments $s\in S'$. 
The task is to find a minimum weight subset $S'\subseteq S$ that separates two given points $a$ and $b$.
Our time bounds will be expressed as a function of $n$, the number of segments in $S$,
and $k$, the number of pairs of segments in $S$ that intersect.
We first describe the algorithm, and then justify its correctness.

We assume for simplicity of exposition that the segment $\segment{ab}$ is vertical
and does not contain any endpoint of $S$ or any vertex of $\A(S)$.

Let $\gamma$ be a polygonal path contained in $\bigcup S$, possibly with self-intersections.
Because of our assumption on general position, no vertex
of $\gamma$ is on the segment $\segment{ab}$.
We define $N(\gamma;a,b)$ as the number of oriented intersections
of $\gamma$ with $\segment{ab}$: a crossing where $\gamma$ goes from the left to the right of $\segment{ab}$
contributes $+1$ to $N(\gamma;a,b)$,
while a right-to-left crossing contributes $-1$ to $N(\gamma;a,b)$.
We have $N(\reverse (\gamma);a,b)= -N(\gamma;a,b)$.

In a graph, we will use the term \emph{cycle} for a closed walk without repeated vertices.
A polygonal path is \emph{simple} if it does not have self-intersections.

\subsection{The algorithm}
\label{sec:algorithm}

From $S$ we construct its intersection graph 
$\I=(S,\{ ss'\mid s\cap s'\not = \emptyset \})$.
See Fig.~\ref{fig:intersectiongraph}(a)-(b). 
Note that $\I$ has $k$ edges.
To each edge $ss'$ of $\I$ we attach a weight (abstract length) $w(s)+w(s')$.
Any distance in $\I$ will refer to these edge weights.
For any walk $\pi$ in $\I$ we use $\length (\pi)$ for its length, that is, the sum of the weights on its edges
counted with multiplicity, and $S(\pi)=V(\pi)$ for the set of segments along $\pi$.
For any spanning tree $T$ in $\I$ and any edge $e\in E(\I)\setminus E(T)$,
let $\cycle(T,e)$ denote the cycle obtained by concatenating the edge $e$ with 
the path in $T$ connecting both endpoints of $e$. 
(The actual orientation of $\cycle(T,e)$ will not be relevant.)

\begin{figure}
\centering
	\includegraphics[width=\textwidth]{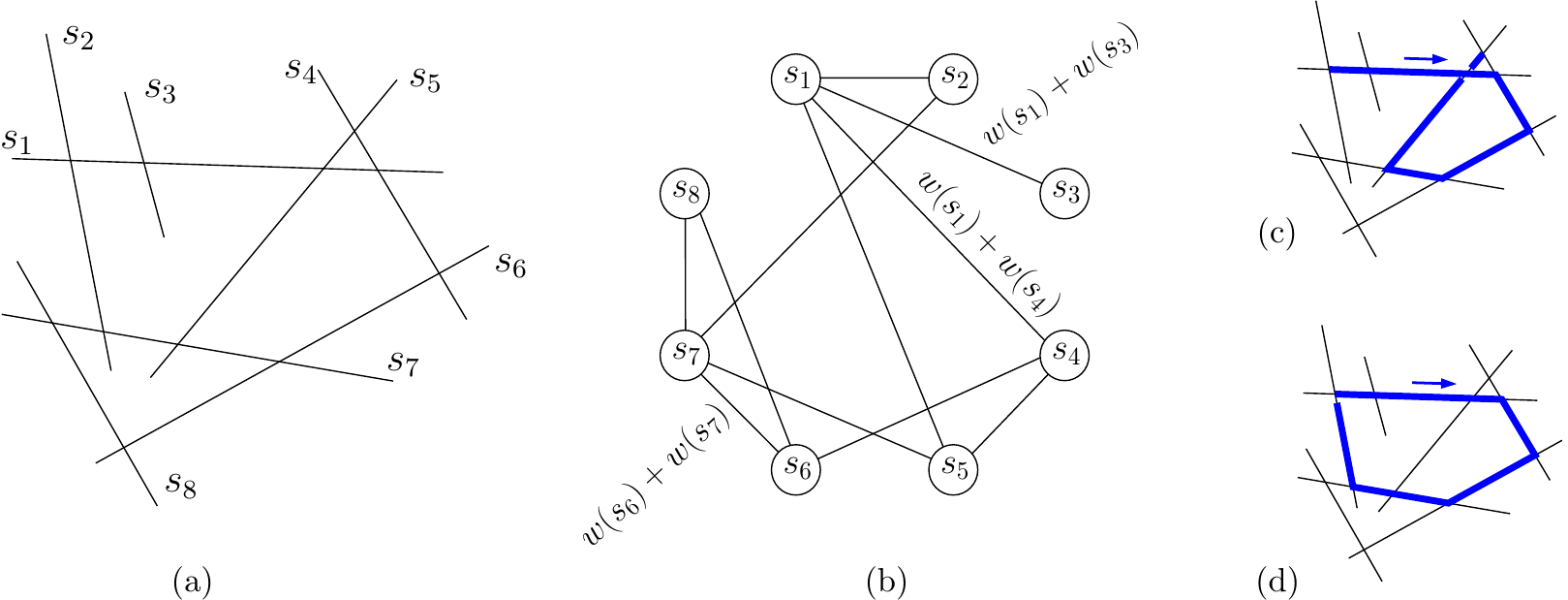}
	\caption{(a) A set of segments $S$. 
			(b) The corresponding intersection graph $\I$ with some of its edge-weights.
			(c) The polygonal path $\gamma(\pi)$ for the walk $\pi=s_2 s_1 s_4 s_6 s_7 s_5 s_4$.
			(d) The closed polygonal path $\gamma(\pi)$ for the closed walk $\pi=s_2 s_1 s_4 s_6 s_7 s_2$.}
	\label{fig:intersectiongraph}
\end{figure}	

Consider any walk $\pi=s_0 s_1\cdots s_t$ in $\I$.
This walk defines a polygonal path, denoted by $\gamma(\pi)$, 
which has vertices $x_0,x_1,\dots ,x_{t-1}$,
where $x_j=s_j\cap s_{j+1}$ for $j=0,\dots,t-1$. 
If $\pi$ is a closed walk with $s_0=s_t$, then we take $\gamma(\pi)$ to be 
a closed polygonal path whose last edge is $x_{t-1}x_0$, which is contained in $s_0$.
See Fig.~\ref{fig:intersectiongraph}(c)-(d). 
The polygonal path $\gamma(\pi)$ is contained in $\bigcup S(\pi)$.
Note that even if $\pi$ is a cycle
the closed polygonal path $\gamma(\pi)$ may have self-intersections.

For any segment $r\in S$, let $T_r$ be a shortest-path tree in $\I$ from $r$;
if there are several we fix one of them.
We will mainly use polygonal paths arising from cycles $\cycle(T_r,e)$,
where $e\in E(\I)\setminus E(T_r)$.
Thus we introduce the notation $\gamma(r,e)=\gamma (\cycle(T_r,e))$.
(Again, the actual orientation of $\gamma(r,e)$ will not be relevant.)

The algorithm is the following. We
compute the set
\[
	P=\{ (r,e)\in S\times E(\I) \mid \mbox{$e\in E(\I)\setminus E(T_r)$ and 
										$N(\gamma(r,e));a,b)\not = 0$}\},
\]
choose
\[
	(r^*,e^*)= \arg~ \min_{(r,e)\in P}~ \length(\cycle(T_r,e)),
\]
and return $S(\cycle(T_{r^*},e^*))$.
This finishes the description of the algorithm.
For analyzing it, it will be convenient to use the notation
$\tau^*=\cycle(T_{r^*},e^*)$ and $\gamma^*$ for the polygonal path $\gamma(\tau^*)$. 

The algorithm, as described above,
can be implemented in $\mathcal{O}(n^3k)$ time in a straightforward way.
We can speed up the procedure to obtain the following result.

\begin{lemma}\label{le:time}
	The pair $(r^*,e^*)$ can be computed in $\mathcal{O}(nk+ n^2\log n)$ time.
\end{lemma}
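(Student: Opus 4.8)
The plan is to run Dijkstra's algorithm once from each segment $r\in S$ in the weighted graph $\I$, and to augment every run with $O(k)$ extra work that evaluates $\length(\cycle(T_r,e))$ and $N(\gamma(r,e);a,b)$ for each non-tree edge $e$ in constant time. A single Dijkstra run on $\I$ (which has $n$ vertices and $k$ edges) costs $O(k+n\log n)$ with Fibonacci heaps, so the $n$ runs alone already give the claimed bound; the real content is to show that the per-edge bookkeeping, together with maintaining the running $\arg\min$ defining $(r^*,e^*)$, also fits in $O(k)$ per tree.

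The lengths are easy. Writing $d_r(v)$ for the length of the $r$--$v$ path in $T_r$ (produced by Dijkstra), the cycle $\cycle(T_r,e)$ for $e=ss'$ is the tree path from $s$ to $s'$ plus $e$, so $\length(\cycle(T_r,e))=w(e)+d_r(s)+d_r(s')-2\,d_r(\lca(s,s'))$, with the $\lca$ taken in $T_r$ rooted at $r$. After an $O(n)$ lowest-common-ancestor preprocessing of $T_r$ this is one $O(1)$ query per non-tree edge, hence $O(k)$ per tree.

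The delicate quantity is $N(\gamma(r,e);a,b)$: an edge of $\gamma$ lying in a segment $s_j$ runs from $s_{j-1}\cap s_j$ to $s_j\cap s_{j+1}$, so it depends on a triple of consecutive segments, which at first sight blocks any per-edge decomposition. I would remove this dependence as follows. Assume general position so that the line through $\segment{ab}$ avoids all vertices of $\A(S)$ and all segment endpoints. For an intersection point $x$ put $\sigma(x)=+1$ if $x$ lies to the right of this line and $-1$ if to the left, and for a segment $u$ put $\beta(u)=1$ if $u$ crosses $\segment{ab}$ and $0$ otherwise. A sub-segment of $s_j$ from $p$ to $q$ crosses $\segment{ab}$ exactly when $\sigma(p)\neq\sigma(q)$ and $\beta(s_j)=1$, and its oriented contribution to $N$ is then $\tfrac12(\sigma(q)-\sigma(p))\,\beta(s_j)$, the orientation being carried entirely by $\sigma$ (moving rightward is a left-to-right crossing). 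Summing this over all edges of the closed path $\gamma(\pi)$ of a closed walk $\pi$ and regrouping the telescoping terms by graph edges yields
\[
  N(\gamma(\pi);a,b)=\sum_{(u,v)}\ \hat\sigma(uv)\cdot\tfrac12\big(\beta(u)-\beta(v)\big),
\]
where the sum ranges over the directed edges $(u,v)$ traversed by $\pi$ and $\hat\sigma(uv)=\sigma(u\cap v)$. In other words, $N$ is the circulation around $\pi$ of the \emph{antisymmetric} edge function $g(u\to v)=\hat\sigma(uv)\,\tfrac12(\beta(u)-\beta(v))$, which depends only on the edge and its orientation; the triple dependence has vanished.

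Because $g$ is antisymmetric, its partial sums along $T_r$ telescope. Rooting $T_r$ at $r$, setting $G_r(r)=0$ and $G_r(v)=G_r(\text{parent})+g(\text{parent}\to v)$, one $O(n)$ traversal computes $G_r$ at every vertex, and for any non-tree edge $e=ss'$ the circulation around $\cycle(T_r,e)$ is $N(\gamma(r,e);a,b)=G_r(s')-G_r(s)+g(s'\to s)$, obtained in $O(1)$. The tables $\hat\sigma$ and $\beta$ are precomputed once in $O(k)$ and $O(n)$ time. Each root therefore costs $O(k+n\log n)$ (Dijkstra, the $O(n)$ traversal and $\lca$ build, and $O(1)$ per non-tree edge for $\length$ and $N$), and while scanning we keep the $\arg\min$ of $\length(\cycle(T_r,e))$ over the pairs with $N\neq 0$; summing over the $n$ roots gives $O(nk+n^2\log n)$. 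The main obstacle is precisely the triple dependence of the edges of $\gamma$, and the key step is recognizing that the oriented crossing number collapses to the circulation of the antisymmetric edge function $g$, after which the computation reduces to standard shortest-path-tree and $\lca$ machinery.
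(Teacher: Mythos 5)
Your proposal is correct, and its outer skeleton coincides with the paper's: one Dijkstra run per root in $\mathcal{O}(k+n\log n)$, an $\lca$ structure giving $\length(\cycle(T_r,ss'))=w(s)+w(s')+d_\I(r,s)+d_\I(r,s')-2d_\I(r,\lca(s,s'))$ in $\mathcal{O}(1)$, and an $\mathcal{O}(1)$ evaluation of $N(\gamma(r,e);a,b)$ per non-tree edge. Where you genuinely diverge is in that last evaluation. The paper keeps the triple dependence: it propagates $N_r(s)=N(\gamma(T_r[s]);a,b)$ down the tree via $N_r(s)=N_r(p_r(s))+N(\gamma(p_r(p_r(s))p_r(s)s);a,b)$, and then assembles $N(\gamma(r,ss');a,b)$ from $N_r(s)$, $N_r(s')$ plus two local correction terms, one at the junction $p_r(s)ss'p_r(s')$ and one at the root $C_r(s')rC_r(s)$, which requires the extra bookkeeping of the children-of-root $C_r(\cdot)$ to close the cycle through $r$. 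You instead collapse $N$ for any \emph{closed} walk into the circulation of the antisymmetric edge function $g(u\to v)=\sigma(u\cap v)\cdot\tfrac12\bigl(\beta(u)-\beta(v)\bigr)$; your regrouping is sound, since the edge of $\gamma(\pi)$ inside $s_j$ contributes $\tfrac12(\sigma(x_j)-\sigma(x_{j-1}))\beta(s_j)$ and summing by the points $x_j=s_j\cap s_{j+1}$ yields exactly $\sum_j g(s_j\to s_{j+1})$ for closed walks (for open walks two boundary terms survive, but you never need them: $G_r$ serves only as a potential, and antisymmetry makes $G_r(s')-G_r(s)+g(s'\to s)$ the correct circulation). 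This buys a cleaner constant-time query with no $C_r$ bookkeeping, at the modest price of a slightly stronger general-position assumption — the full line through $a$ and $b$, not just $\segment{ab}$, must avoid vertices of $\A(S)$ and endpoints, which is achievable by perturbing $a$ and $b$ within their cells and is on a par with the paper's own simplifying assumption. Your omission of the explicit $\mathcal{O}(n^2)$ construction of $\I$ (and of the tables $\beta$, $\hat\sigma$) is harmless, as it is dominated by the claimed bound.
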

\begin{proof}
	The graph $\I$ can be constructed explicitly in $\mathcal{O}(n^2)$ time by checking
	each pair of segments, whether they cross or not.
	Recall that $\I$ has $k$ edges.

	For any segment $r\in S$, let us define
	\begin{align*}
		E_r ~~&=~~ \{ e\in E(\I)\setminus E(T_r)\mid (r,e)\in P\} \\
			&=~~  \{ e\in E(\I) \mid \mbox{$e\in E(\I)\setminus E(T_r)$ and 
										$N(\gamma(r,e));a,b)\not = 0$}\}. 
	\end{align*}
	Note that 
	\[
		P=\bigcup_{r\in S} \{ r\}\times E_r,
	\]
	and therefore 
	\[
		\min_{(r,e)\in P}~ \length(\cycle(T_r,e)) = 
		\min_{r\in S}~ \min_{e\in E_r}~ \length(\cycle(T_r,e)).
	\]
	Thus, $(r^*,e^*)$ can be computed by finding, for each $r\in S$,
	the value 
	\[
		\min_{e\in E_r}\length(\cycle(T_r,e)).
	\]
	We shall see that, for each fixed
	$r\in S$, such value  
	can be computed in $\mathcal{O}(k+n\log n)$ time. 
	It then follows that  $(r^*,e^*)$ can be found in 
	$|S|\times \mathcal{O}(k+n\log n)=\mathcal{O}(nk+ n^2\log n)$ time.	
	
	\begin{figure}
	\centering
		\includegraphics[width=\textwidth]{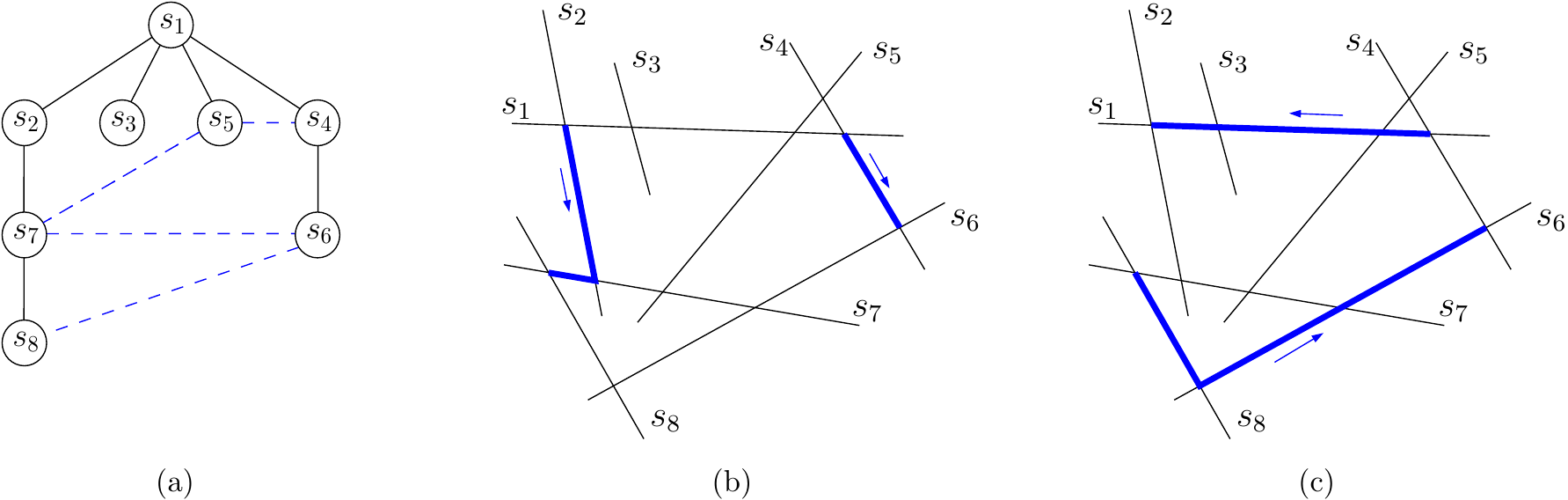}
		\caption{(a) Tree $T_{s_1}$ for the scenario of Fig.~\ref{fig:intersectiongraph} assuming unit weights in the segments. 
				In this case $C_{s_1}(s_8)=s_2$ and $C_{s_1}(s_6)=s_4$.
			(b) The polygonal paths $\gamma(T_{s_1}[s_8])$ and $\gamma(T_{s_1}[s_6])$.
			(c) The polygonal paths $\gamma(p_{s_1}(s_8)s_8s_6 p_{s_1}(s_6))= \gamma(s_7s_8s_6 s_4)$ 
				and $\gamma(C_{s_1}(s_6)s_1 C_{s_1}(s_8))= \gamma(s_4s_1s_2)$ that are used to compute $N(\gamma(s_1,s_6s_8);a,b)$
				in Lemma~\ref{le:time}.
			}
	\label{fig:intersectiongraph2}
	\end{figure}
	
	For the rest of the proof, let us fix a segment $r\in S$. 
	Computing $T_r$ takes $\mathcal{O}(|E(\I)| + |V(\I)|\log |V(\I)|) = \mathcal{O}(k+n\log n)$ time. 
	For any segment $s\in S$, $s\not =r$, let $T_r[s]$ denote the path in $T_r$ from $r$ 
	to $s$. We define $N_r(s)=N(\gamma(T_r[s]);a,b)$ 
	and define $C_r(s)$ to be the child of $r$ in the path $T_r[s]$.
	See~Fig.~\ref{fig:intersectiongraph2}(a)--(b).
	The values $N_r(s)$, $s\in S$, can be computed in $\mathcal{O}(n)$ 
	time using a BFS traversal of $T_r$: 
	if $p_r(s)$ is the parent of $s$ in $T_r$, we can compute $N_r(s)$
	from $N_r(p_r(s))$ in $\mathcal{O}(1)$ time using  
	\[ 
		N_r(s)~=~ N_r(p_r(s))+ N(\gamma(p_r(p_r(s))p_r(s)s) ;a,b).
	\]
	Similarly $C_r(s)$, $s\in S$, can be computed in $\mathcal{O}(n)$ time:
	we assign $C_r(s')=s'$ for all children $s'$ of $r$ and use
	that $C_r(s)=C_r(p_r(s))$ for any $s$ not adjacent to $r$.
	
	For $ss'\in E(\I)\setminus E(T_r)$ we have that $N(\gamma(r,ss'):a,b)$ is equal to
	\begin{align*}
		 N_r(s) + N(\gamma( p_r(s)ss' p_r(s'));a,b) - N_r(r,s') + N(\gamma( C_r(s')r C_r(s));a,b).
	\end{align*}
	See~Fig.~\ref{fig:intersectiongraph2}(b)--(c).
	(The negative sign comes from the reversal of $T_r[s]$.)
	Therefore, each $N(\gamma(r,ss');a,b)$ can be computed in $\mathcal{O}(1)$ time from the values $N_r(s)$, $N_r(s')$, $C_r(s')$, $C_r(s)$.
	It follows that $E_r$ can be constructed in $\mathcal{O}(|E(\I)|)=\mathcal{O}(k)$ time.
	
	The length of any cycle $\cycle(T_r,e)$ can be computed in $\mathcal{O}(1)$ time per cycle in a similar fashion.
	For each vertex $s$, we store at $s$ its distance $d_\I(r,s)$ from the root $r$.
	We also construct a data structure for finding lowest common ancestor ($\lca$) of two vertices in constant
	time. Such data structure can be constructed in $\mathcal{O}(n)$ time~\cite{lca1,lca2}. The length of a cycle can then be recovered using
	\[
		\length(\cycle (T_r,ss'))~=~ d_\I (r,s)+ w(s)+w(s') + d_\I(r,s') - 2 d_\I (r,\lca (s,s')).
	\]
	Equipped with this, we can in $\mathcal{O}(k)$ time compute
	\[
		\min_{e\in E_r}~ \length(\cycle(T_r,e)). \qedhere
	\]
\end{proof}

The following special case will be also relevant later on.
\begin{lemma}\label{le:time2}
	If the weights of the segments $S$ are $0$ or $1$, 
	then the pair $(r^*,e^*)$ can be computed in $\mathcal{O}(nk+ n^2)$ time.
\end{lemma}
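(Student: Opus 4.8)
The plan is to revisit the proof of Lemma~\ref{le:time} and identify exactly where the logarithmic factor enters, then argue that it disappears under the $\{0,1\}$-weight hypothesis. Examining the previous proof, the only place where an $\mathcal{O}(n\log n)$ term arises (as opposed to $\mathcal{O}(n)$ or $\mathcal{O}(k)$) is in the computation of each shortest-path tree $T_r$, where Dijkstra's algorithm contributes $\mathcal{O}(|E(\I)|+|V(\I)|\log|V(\I)|)=\mathcal{O}(k+n\log n)$ per root $r$, for a total of $\mathcal{O}(nk+n^2\log n)$. Everything else in the per-root processing --- computing the values $N_r(s)$ and $C_r(s)$ by BFS traversal, building $E_r$, evaluating each $N(\gamma(r,ss');a,b)$ in $\mathcal{O}(1)$, constructing the $\lca$ data structure, and evaluating cycle lengths --- already runs in $\mathcal{O}(k+n)$ time per root. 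So the entire task reduces to showing that each shortest-path tree can be computed in $\mathcal{O}(k+n)$ time when the edge weights are restricted.

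First I would make precise what edge weights occur in $\I$ under the hypothesis. Recall that the weight attached to an edge $ss'$ of $\I$ is $w(s)+w(s')$, so if each $w(s)\in\{0,1\}$ then every edge weight lies in $\{0,1,2\}$. Consequently, for any fixed root $r$, every shortest-path distance $d_\I(r,s)$ is a nonnegative integer bounded by $2(n-1)$. Over such a small integer range one does not need a comparison-based priority queue: a shortest-path tree from $r$ can be computed in linear time in the size of the graph. I would invoke Dial's bucket-based implementation of Dijkstra's algorithm, which runs in $\mathcal{O}(|E|+W)$ time when all edge weights are integers in $[0,W]$ and distances are bounded, where $W$ is the maximum distance; here $W=\mathcal{O}(n)$ and $|E|=k$, giving $\mathcal{O}(k+n)$ per root. (Alternatively one can replace each weight-$2$ edge by a path of two weight-$1$ edges through a dummy vertex, increasing the vertex and edge counts by only $\mathcal{O}(k)$, and run an integer-weighted shortest-path routine; but Dial's algorithm is the cleanest justification.)

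The remaining step is simply to reassemble the cost. With $T_r$ computed in $\mathcal{O}(k+n)$ time, the per-root processing described in Lemma~\ref{le:time} computes $\min_{e\in E_r}\length(\cycle(T_r,e))$ in $\mathcal{O}(k+n)$ additional time, so each root costs $\mathcal{O}(k+n)$ overall. Summing over the $n$ choices of $r$, and adding the $\mathcal{O}(n^2)$ time to build $\I$ explicitly, yields a total running time of $n\cdot\mathcal{O}(k+n)+\mathcal{O}(n^2)=\mathcal{O}(nk+n^2)$, as claimed.

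I do not expect a genuine obstacle here, since the lemma is a routine refinement of Lemma~\ref{le:time}: the substance is entirely the observation that bounded integer edge weights permit linear-time shortest paths. The only point requiring mild care is confirming that no other part of the earlier proof secretly depends on comparison-based sorting or on a priority queue --- but the $N_r$ and $C_r$ computations use plain BFS order, the $\lca$ structure is already linear-time, and the cycle-length formula is evaluated in constant time per cycle, so the logarithmic factor is localized to the shortest-path computation alone.
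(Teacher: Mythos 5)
Your proposal is correct and follows essentially the same route as the paper: the paper's proof likewise observes that the edge weights of $\I$ lie in $\{0,1,2\}$, so each shortest-path tree $T_r$ can be built in $\mathcal{O}(k+n)$ time, and then reuses the per-root machinery of Lemma~\ref{le:time} to get $\mathcal{O}(nk+n^2)$ overall. You merely spell out the linear-time shortest-path step more explicitly (Dial's buckets, or subdividing weight-$2$ edges), which the paper leaves implicit.
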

\begin{proof}
	In this case, a shortest path tree $T_r$ can be computed 
	in $\mathcal{O}(|E(\I)| + |V(\I)|) = \mathcal{O}(k+n)$ time
	because the edge weights of $\I$ are $0$, $1$, or $2$.
	Using the approach described in the proof of Lemma~\ref{le:time} we spend $\mathcal{O}(k+n)$ per root $r\in S$,
	and thus spend $\mathcal{O}(nk+n^2)$ in total.
\end{proof}

\subsection{Correctness}

Consider the set of closed walks
\[
	\Pi = \{ \pi\mid \mbox{$\pi$ a closed walk in $\I$ with $N(\gamma(\pi);a,b)\not =0$} \}.
\] 
We have the following property, known as 3-path condition.

\begin{lemma}\label{le:3-path}
	Let $\alpha_0,\alpha_1,\alpha_2$ be 3 walks in $\I$ from $s$ to $s'$. For $i=0,1,2$,
	let $\pi_i$ be the closed walk obtained by concatenating $\alpha_{i-1}$ and the reverse of $\alpha_{i+1}$,
	where indices are modulo 3.
	If one of the walks $\pi_0,\pi_1,\pi_2$ is in $\Pi$, then at least two of them are in $\Pi$.
\end{lemma}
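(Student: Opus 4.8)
The plan is to prove the following sharper, purely quantitative identity, from which the 3-path condition is immediate:
\[
	N(\gamma(\pi_0);a,b) + N(\gamma(\pi_1);a,b) + N(\gamma(\pi_2);a,b) = 0 .
\]
Indeed, if this holds then the number of indices $i$ with $N(\gamma(\pi_i);a,b)\neq 0$ can never be exactly one, since a single nonzero summand could not cancel. Hence if one of $\pi_0,\pi_1,\pi_2$ lies in $\Pi$ at least two do. The engine of the whole argument is that $N(\cdot\,;a,b)$ is \emph{additive} over concatenation of polygonal paths (it counts signed crossings edge by edge, and concatenation merely takes the union of edge sets) and satisfies $N(\reverse(\gamma);a,b)=-N(\gamma;a,b)$.

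First I would fix notation. Write $\alpha_j = s=a^j_0,a^j_1,\dots,a^j_{m_j}=s'$ and $\gamma_j=\gamma(\alpha_j)$; the first vertex $Q_j:=a^j_0\cap a^j_1$ of $\gamma_j$ lies on $s$, and its last vertex $P_j:=a^j_{m_j-1}\cap a^j_{m_j}$ lies on $s'$. The key step is to \emph{unfold} the closed polygonal path attached to each $\pi_i$. Tracing the definition of $\gamma(\cdot)$ on the closed walk $\pi_0=\alpha_2\cdot\reverse(\alpha_1)$, one checks that $\gamma(\pi_0)$ is the concatenation of $\gamma_2$ (from $Q_2$ to $P_2$), the junction edge of the walk from $P_2$ to $P_1$ lying in $s'$, the reverse of $\gamma_1$ (from $P_1$ to $Q_1$), and the closing edge from $Q_1$ to $Q_2$ lying in $s$. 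By additivity and the reversal rule,
\[
	N(\gamma(\pi_0);a,b)= N(\gamma_2;a,b)-N(\gamma_1;a,b)+N(\overrightarrow{P_2P_1};a,b)+N(\overrightarrow{Q_1Q_2};a,b),
\]
and cyclically for $\pi_1$ (junction edge $\overrightarrow{P_0P_2}$ in $s'$, closing edge $\overrightarrow{Q_2Q_0}$ in $s$) and for $\pi_2$ (edges $\overrightarrow{P_1P_0}$ and $\overrightarrow{Q_0Q_1}$). The only facts I must verify carefully are that the junction and closing edges genuinely lie in $s'$ and $s$ respectively; both follow because each edge $x_jx_{j+1}$ of $\gamma(\cdot)$ is contained in the common segment $s_{j+1}$ of its two defining vertices.

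Summing the three displayed identities, the six terms $\pm N(\gamma_j;a,b)$ cancel in pairs. The remaining six connecting edges, carrying the directions induced above, concatenate into two closed polygonal loops: $P_0\to P_2\to P_1\to P_0$, lying entirely in the line through $s'$, and $Q_1\to Q_2\to Q_0\to Q_1$, lying entirely in the line through $s$. It then suffices to observe that \emph{a closed polygonal path contained in a single line $\ell$ has signed crossing number $0$ with $\segment{ab}$}: such a path meets $\segment{ab}$ only at the fixed point $\ell\cap\segment{ab}$ (if any; by general position $\ell$ is not the line through $\segment{ab}$), and on a fixed line $N(\overrightarrow{uv};a,b)=g(v)-g(u)$, where $g$ is the indicator of which side of the vertical line through $\segment{ab}$ a point lies on, so the sum telescopes to $0$ around any closed loop. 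Thus both residual loop-sums vanish and the displayed identity follows.

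The main obstacle is the unfolding step: getting the decomposition of each $\gamma(\pi_i)$ exactly right, confirming that the two extra edges lie in $s'$ and $s$, and tracking the induced directions of all six connecting edges so that they close up precisely into the two loops $P_0P_2P_1$ and $Q_1Q_2Q_0$. Once this bookkeeping is pinned down, the telescoping of the $\gamma_j$ terms and the single-line lemma complete the proof with no further computation.
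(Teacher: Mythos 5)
Your proof is correct and follows essentially the same route as the paper: both unfold each closed path $\gamma(\pi_i)$ into $\gamma(\alpha_{i-1})$, the reverse of $\gamma(\alpha_{i+1})$, and two connector edges lying on $s'$ and $s$, and both rest on the key identity $\sum_{i=0}^{2} N(\gamma(\pi_i);a,b)=0$, from which the 3-path condition is immediate. The only cosmetic difference is that the paper routes the connectors through fixed base points $p\in s$ and $p'\in s'$ to write $N(\gamma(\pi_i);a,b)=N(\beta_{i-1};a,b)-N(\beta_{i+1};a,b)$ and telescope over $i$, whereas you sum directly and dispose of the six leftover connector edges by observing that they close up into two loops contained in single lines, which is the same collinear-additivity fact the paper invokes implicitly.
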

\begin{proof}
	\begin{figure}
	\centering
		\includegraphics[width=\textwidth]{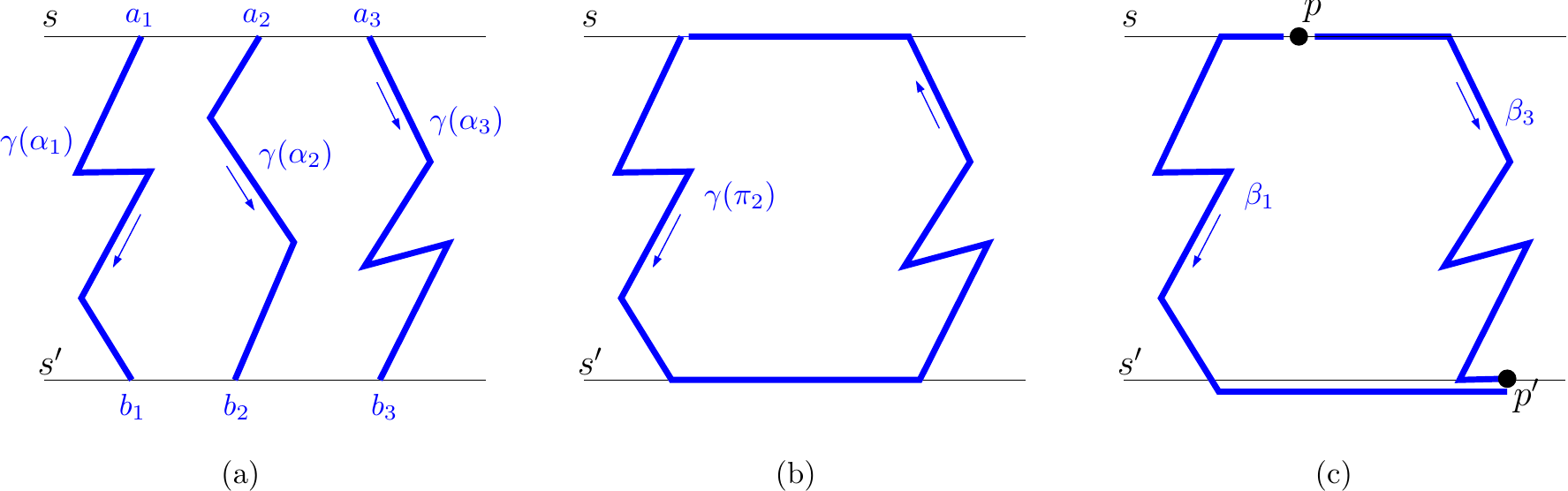}
		\caption{(a) Scenario in the proof of Lemma~\ref{le:3-path}.
			(b) The polygonal path $\gamma(\pi_2)$.
            (c) The polygonal paths $\beta_1$ and $\beta_3$. (The bottom of $\beta_1$ lies on $s'$. We draw it outside because of the common part with $\beta_3$.)}
	\label{fig:3pathcondition}
	\end{figure}
	(This result is a consequence of the group structure for relative $\ZZ_2$-homology. 
	We provide an elementary proof that avoids using homology.)
	For $i=0,1,2$, let $a_i$ be the starting vertex of the polygonal path $\gamma(\alpha_i)$
	and let $b_i$ be the ending vertex.
	The polygonal paths $\gamma(\alpha_0),\gamma(\alpha_1),\gamma(\alpha_2)$ 
	start on $s$ and finish on $s'$.
	However, they may have different endpoints.
	See Fig.~\ref{fig:3pathcondition}.
	To handle this, we choose a point $p$ on $s$ and a point $p'$ on $s'$,
	and define $\beta_i$ to be the polygonal path obtained
	by the concatenation of $pa_i$, $\gamma(\alpha_i)$, and $b_ip'$. 
%\complain{P:WHY IS $\beta_1$ DRAWN OUTSIDE $s'$ in the Fig.? S: Comment added to the caption.}
	A simple but tedious calculation shows that, using indices modulo 3,
	\begin{align*}
		N(\gamma(\pi_i);a,b)~~=~~ N(\beta_{i-1};a,b)- N(\beta_{i+1};a,b).
	\end{align*}
	Indeed, since
	\[
		N(a_{i+1} a_{i-1};a,b)= N(p a_{i-1};a,b) + N(a_{i+1}p ;a,b)
	\]
	and 
	\[
		N(b_{i-1} b_{i+1};a,b)= N(p' b_{i+1};a,b) + N(b_{i-1}p';a,b),
	\]
	we have 
	\begin{align*}
		N(\gamma(\pi_i);a,b)~&=~~~~ N(\gamma(\alpha_{i-1}) ;a,b) 
							+ N(b_{i-1}b_{i+1};a,b) \\
							& ~~~~ + N(\reverse (\gamma(\alpha_{i+1}));a,b) 
							+ N(a_{i+1} a_{i-1};a,b)\\
							&=~~~~ N(\gamma(\alpha_{i-1}) ;a,b) 
							+ N(p' b_{i+1};a,b) + N(b_{i-1}p';a,b) \\
							& ~~~~ - N(\gamma(\alpha_{i+1});a,b) 
							+ N(p a_{i-1};a,b) + N(a_{i+1}p ;a,b)\\
							&=~~~~ N(p a_{i-1};a,b) + N(\gamma(\alpha_{i-1}) ;a,b) + N(b_{i-1}p';a,b) \\
							& ~~~~ - N(p a_{i+1} ;a,b) - N(\gamma(\alpha_{i+1});a,b) - N( b_{i+1}p';a,b)\\
							&=~~~ N(\beta_{i-1};a,b)- N(\beta_{i+1};a,b).
	\end{align*}
	It follows that, using indices modulo $3$,
	\begin{align*}
		\sum_{i=0}^2 N(\gamma(\pi_i);a,b) 
			~~=~~ \sum_{i=0}^2\left(N(\beta_{i-1};a,b) - N(\beta_{i+1};a,b) \right)
			~~=~~ 0. 
	\end{align*}
	Therefore, if $N(\gamma(\pi_i);a,b)\not= 0$ for some $i$, at least another cycle $\pi_j$, $j\not= i$,
	must have $N(\gamma(\pi_j);a,b)\not= 0$.\qedhere
\end{proof}

When a family of closed walks satisfies the 3-path condition, there is a general method to find a shortest element in the family.
The method is based on considering fundamental-cycles defined by shortest-path trees, which is
precisely what our algorithm is doing specialized for the family $\Pi$. 
We thus obtain:

\begin{lemma}\label{le:shortest}
	The cycle $\tau^*$ is a shortest element of $\Pi$.
\end{lemma}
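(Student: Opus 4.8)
The plan is to show that the algorithm's output $\tau^*$ achieves the minimum length over all of $\Pi$, by exploiting the 3-path condition established in Lemma~\ref{le:3-path} together with a standard exchange argument on fundamental cycles of shortest-path trees. First I would observe that $\tau^*$ is, by construction, a shortest cycle among all fundamental cycles $\cycle(T_r,e)$ (over all roots $r$ and non-tree edges $e$) that lie in $\Pi$, and that each such fundamental cycle is indeed a closed walk in $\I$; so $\length(\tau^*)$ is an upper bound that we must show is also a lower bound for $\min_{\pi\in\Pi}\length(\pi)$. For the lower bound, I would take an arbitrary shortest element $\pi_{\min}\in\Pi$ and argue that it can be ``decomposed'' into fundamental cycles, at least one of which lies in $\Pi$ and has length no greater than $\length(\pi_{\min})$.

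The core of the argument is the exchange step, and here is where the 3-path condition does the work. Fix the root $r$ to be any vertex on $\pi_{\min}$ and consider the shortest-path tree $T_r$. Since $\pi_{\min}$ is a closed walk, it traverses some set of non-tree edges $e_1,\dots,e_m$; if it used only tree edges it would have $N(\gamma(\pi_{\min});a,b)=0$ (a closed walk in a tree is null-homotopic in the relevant sense), contradicting $\pi_{\min}\in\Pi$, so $m\ge 1$. The idea is to ``straighten'' $\pi_{\min}$ one non-tree edge at a time: write $\pi_{\min}$ as a concatenation passing through a non-tree edge $e_j=ss'$, and compare it against the fundamental cycle $\cycle(T_r,e_j)$ and the closed walk obtained by replacing the detour with the tree path. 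Applying Lemma~\ref{le:3-path} to these three closed walks (built from three walks between a common pair of endpoints, namely $r$ and an endpoint of $e_j$), I conclude that if $\pi_{\min}\in\Pi$ then at least one of the two ``simpler'' closed walks is also in $\Pi$. Crucially, because tree paths are shortest paths, each fundamental cycle $\cycle(T_r,e_j)$ satisfies $\length(\cycle(T_r,e_j))\le\length(\pi_{\min})$, and the residual closed walk is strictly simpler (fewer non-tree edges or shorter) so that induction on the number of non-tree edges applies.

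Carrying out this induction yields, at the base case, a single fundamental cycle $\cycle(T_r,e)\in\Pi$ with $\length(\cycle(T_r,e))\le\length(\pi_{\min})$. Since $(r,e)\in P$ by definition of $P$ and $\tau^*$ minimizes length over $P$, we get $\length(\tau^*)\le\length(\cycle(T_r,e))\le\length(\pi_{\min})=\min_{\pi\in\Pi}\length(\pi)$, which combined with the upper bound proves $\tau^*$ is a shortest element of $\Pi$.

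I expect the main obstacle to be making the inductive straightening precise: one must set up the decomposition so that the three walks fed into Lemma~\ref{le:3-path} genuinely share common endpoints, verify that the ``replace the detour by the tree path'' operation strictly decreases the induction measure while never increasing length, and ensure that membership in $\Pi$ is correctly tracked through each exchange. The delicate point is that the 3-path condition only guarantees that \emph{some} simpler closed walk stays in $\Pi$, not a prescribed one, so the induction must branch on which of the two survives while maintaining the length bound in either case; handling the accounting of the $N(\gamma(\cdot);a,b)$ values consistently across reversals and concatenations (as in the computation inside Lemma~\ref{le:3-path}) is the technically fussy part.
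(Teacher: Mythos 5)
Your proposal is correct and takes essentially the same route as the paper: the paper's proof of Lemma~\ref{le:shortest} simply invokes the fundamental cycle method of Thomassen~\cite{t-egsnc-90} (see also~\cite{mt-gs-01}), which is precisely the 3-path-condition exchange argument on fundamental cycles of shortest-path trees that you sketch. The paper delegates the telescoping/induction details to those references, so your reconstruction (root $r$ chosen on the shortest walk, exchanges via Lemma~\ref{le:3-path}, and length control from the shortest-path property of $T_r$) just fills in what the paper cites as known.
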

\begin{proof}
It is a consequence of the 3-path condition, 
	that a shortest cycle in 
	\[
		\{ \cycle(T_r,e)\mid r\in S, e\in E(\I)\setminus E(T_r), \cycle(T_r,e)\in \Pi \} 
		~~=~~ 
		\{ \cycle(T_r,e)\mid (r,e)\in P \} 
	\]
	is a shortest cycle in $\Pi$. That is, the search for a shortest element in $\Pi$
	can be restricted to cycles of the type $\cycle(T_r,e)$.
	See Thomassen~\cite{t-egsnc-90} or the book by Mohar and Thomassen~\cite[Chapter 4]{mt-gs-01} 
	for the so-called fundamental cycle method. 
	(The method is described for unweighted graphs but it also works for weighted graphs.
	See, for example, Cabello et al.~\cite{ccl-fsntc-10} for the generalized case of weighted, directed graphs.)
\end{proof}

The next step in our argument is showing that $\gamma^*$ is simple (without self-intersections) 
and separates $a$ and $b$.
We will use the following characterization of which simple, closed  
polygonal paths separate $a$ and $b$. 

\begin{lemma}\label{le:jordan}
	For any simple, closed polygonal path $\gamma$ we have $|N(\gamma;a,b)|\le 1$.
	Furthermore, $\gamma$ separates $a$ and $b$ if and only if $N(\gamma;a,b)= \pm 1$.
\end{lemma}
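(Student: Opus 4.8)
The plan is to reduce the statement to the behaviour of the winding number of the Jordan curve $\gamma$ around the points $a$ and $b$. Since $\gamma$ is simple and closed it is a Jordan curve, so by the Jordan curve theorem it bounds a well-defined interior region and a well-defined exterior region. As $\gamma\subseteq\bigcup S$ and neither $a$ nor $b$ is incident to any segment of $S$, both points lie off $\gamma$, so their winding numbers $\mathrm{wn}(\gamma,a)$ and $\mathrm{wn}(\gamma,b)$ are defined. For a Jordan curve these winding numbers take only the values $0$ (for points in the exterior) and a single value $\eta\in\{+1,-1\}$ (for all points in the interior), with $\eta$ determined by the orientation of $\gamma$. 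I would therefore aim to prove the identity
\[
  N(\gamma;a,b)~=~\epsilon\,\bigl(\mathrm{wn}(\gamma,b)-\mathrm{wn}(\gamma,a)\bigr)
\]
for a fixed sign $\epsilon\in\{+1,-1\}$, from which both parts of the lemma follow immediately.

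The key step is this identity, which I would obtain from a ray decomposition. Assume $\segment{ab}$ is vertical with $a$ above $b$, and let $R_a$ and $R_b$ be the upward vertical rays emanating from $a$ and from $b$ along the supporting line of $\segment{ab}$. Then $R_b$ is the disjoint union of $\segment{ab}$ and $R_a$, so the signed crossing count is additive: measuring every crossing of $\gamma$ with these collinear sets by the same left-to-right $=+1$ convention used for $N(\gamma;a,b)$, one gets $N(\gamma;R_b)=N(\gamma;\segment{ab})+N(\gamma;R_a)$, i.e. $N(\gamma;a,b)=N(\gamma;R_b)-N(\gamma;R_a)$. Next I would invoke the standard fact that the winding number of a closed curve about a point equals, up to a global sign depending only on the ray direction, the signed number of crossings of the curve with any ray issued from that point; applied to the two parallel upward rays $R_a$ and $R_b$ this gives $\mathrm{wn}(\gamma,a)=\epsilon\,N(\gamma;R_a)$ and $\mathrm{wn}(\gamma,b)=\epsilon\,N(\gamma;R_b)$ with the \emph{same} $\epsilon$. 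Combining the two relations yields the displayed identity.

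With the identity in hand the lemma is a four-way case analysis on whether $a$ and $b$ lie inside or outside $\gamma$. If both lie in the same region then $\mathrm{wn}(\gamma,a)=\mathrm{wn}(\gamma,b)$ and $N(\gamma;a,b)=0$; if they lie in different regions then $\{\mathrm{wn}(\gamma,a),\mathrm{wn}(\gamma,b)\}=\{0,\eta\}$ and $N(\gamma;a,b)=\pm1$. In every case $|N(\gamma;a,b)|\le1$, which is the first assertion. Moreover $\gamma$ separates $a$ and $b$ exactly when they lie in different regions of the complement of the Jordan curve, which by the case analysis is exactly when $N(\gamma;a,b)=\pm1$; this is the second assertion.

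The main obstacle is not conceptual but the bookkeeping around general position and sign conventions. The rays $R_a,R_b$ could pass through vertices of $\gamma$ (the hypothesis only guarantees that $\segment{ab}$ itself avoids them), in which case the crossing counts $N(\gamma;R_a),N(\gamma;R_b)$ are not well defined. I would handle this by an infinitesimal rotation of the supporting line of $\segment{ab}$ about the midpoint of $\segment{ab}$: for a small enough angle all crossings with $\segment{ab}$ remain transversal and none are created or destroyed near the vertex-free endpoints, so $N(\gamma;a,b)$ is unchanged, while the finitely many vertices of $\gamma$ are avoided by the two rays. The remaining care is to check that the global sign $\epsilon$ relating winding number to ray-crossings is genuinely the same for $R_a$ and $R_b$ --- this holds because they are parallel and identically oriented --- and to match it against the left-to-right convention; both are routine once the orientation conventions are pinned down.
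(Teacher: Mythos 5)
Your proof is correct, but it takes a genuinely different route from the paper's. The paper argues directly along the segment: since $\gamma$ is a Jordan curve, $\segment{ab}$ alternates between the interior and exterior of $\gamma$, so consecutive crossings alternate between left-to-right and right-to-left; the signed sum therefore telescopes to a value in $\{-1,0,+1\}$, giving $|N(\gamma;a,b)|\le 1$, and the separation claim follows from a parity count ($\segment{ab}$ crosses $\gamma$ an odd number of times exactly when one endpoint is interior and the other exterior). You instead route everything through the identity $N(\gamma;a,b)=\epsilon\bigl(\mathrm{wn}(\gamma,b)-\mathrm{wn}(\gamma,a)\bigr)$, obtained by splitting the ray $R_b=\segment{ab}\cup R_a$ and computing winding numbers by signed ray crossings. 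What the paper's version buys: it is shorter, needs only the bare Jordan curve theorem plus the alternation observation, and needs no perturbation, because the standing general-position assumption already keeps vertices of $\gamma$ off $\segment{ab}$ --- your rays leave that protected zone, which is precisely why you must perturb. What yours buys: the identity is more robust, since it holds for arbitrary closed polygonal curves rather than only simple ones, and it makes the topological meaning of $N$ explicit as a difference of winding numbers, in the spirit of the homology remark opening the proof of Lemma~\ref{le:3-path}; the price is the slightly stronger classical input that a Jordan curve has winding number $0$ at exterior points and a constant $\pm 1$ at interior points. One small point to tighten in your perturbation step: rotating the supporting line about the midpoint of $\segment{ab}$ moves $a$ and $b$ to nearby points $a'$ and $b'$, so you should state explicitly that for a sufficiently small generic angle $a'$ and $b'$ stay in the same components of $\RR^2\setminus\gamma$ as $a$ and $b$ (true since $a,b\notin\gamma$), so that both the crossing count and the separation predicate are preserved; as you say, this is routine bookkeeping, not a gap.
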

\begin{proof} 
	Since $\gamma$ is simple, it defines an interior and an exterior by the Jordan curve theorem.
	The crossings between $\gamma$ and $\segment{ab}$, as we walk along $\segment{ab}$,
	alternate between left-to-right and right-to-left crossings because $\segment{ab}$ has
	pieces alternating in the interior and exterior of $\gamma$.
	Therefore $|N(\gamma;a,b)|\le 1$.
	
	Assume that $\gamma$ separates $a$ and $b$, so that one is in the interior of $\gamma$
	and the other in the exterior.
	Then the segment $\segment{ab}$ crosses $\gamma$ an odd number of times, and it must be 
	$|N(\gamma;a,b)|=1$. Conversely, if $|N(\gamma;a,b)|=1$, then the number of intersections between
	$\gamma$ and $\segment{ab}$ is odd, which implies that one of the points $a$ and $b$ is in the interior
	of $\gamma$ and the other in the exterior.
\end{proof}

We can now prove that $\gamma^*$ is simple using a standard uncrossing argument.
Indeed, a self-crossing of $\gamma^*$ would imply that we can find a strictly 
shorter element in $\Pi$, which would contradict the property stated in Lemma~\ref{le:shortest}.

\begin{lemma}\label{le:feasible}
	The polygonal path $\gamma^*$ is simple and separates $a$ and $b$.
\end{lemma}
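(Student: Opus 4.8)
The plan is to establish the two claims—simplicity and separation—in sequence, relying on the minimality of $\tau^*$ in $\Pi$ (Lemma~\ref{le:shortest}) for the first and on the Jordan-curve characterization (Lemma~\ref{le:jordan}) for the second.

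\textbf{Simplicity via an uncrossing argument.} First I would argue that $\gamma^*$ has no self-intersections by contradiction. Suppose $\gamma^*=\gamma(\tau^*)$ crosses itself at some point $q$. Write the closed walk $\tau^*$ as $s_0 s_1\cdots s_t$ with $s_0=s_t$; the self-crossing at $q$ means two distinct edges of the polygonal path $\gamma^*$ pass through $q$, so $\gamma^*$ decomposes at $q$ into two closed polygonal subpaths $\gamma_1$ and $\gamma_2$, each of which is itself of the form $\gamma(\cdot)$ for a shorter closed walk in $\I$ obtained by splitting $\tau^*$ at the two positions realizing the crossing. The key computation is the additivity of the oriented crossing number with $\segment{ab}$: rerouting at $q$ gives $N(\gamma^*;a,b)=N(\gamma_1;a,b)+N(\gamma_2;a,b)$, since the crossings with $\segment{ab}$ are partitioned between the two subpaths (here I use that $q$ is not on $\segment{ab}$, which holds by the general-position assumption that no vertex of $\gamma$ lies on $\segment{ab}$). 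Because $N(\gamma^*;a,b)\neq 0$—as $\tau^*\in\Pi$—at least one of the two summands is nonzero, say $N(\gamma_1;a,b)\neq 0$, so the corresponding closed walk lies in $\Pi$. But that closed walk is strictly shorter than $\tau^*$: removing the portion forming $\gamma_2$ deletes at least one edge of positive multiplicity from the walk, and the edge weights are nonnegative, so $\length$ does not increase and in fact strictly decreases since an entire nontrivial loop is excised. This contradicts Lemma~\ref{le:shortest}, which asserts $\tau^*$ is a shortest element of $\Pi$. Hence $\gamma^*$ is simple.

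\textbf{Separation via Jordan.} Once $\gamma^*$ is known to be simple, separation is immediate: by construction $(r^*,e^*)\in P$, so $N(\gamma^*;a,b)\neq 0$, and Lemma~\ref{le:jordan} forces $N(\gamma^*;a,b)=\pm 1$ and furthermore tells us that this is equivalent to $\gamma^*$ separating $a$ and $b$. Thus $\gamma^*$ is a simple closed curve contained in $\bigcup S(\tau^*)$ that encloses exactly one of $a,b$.

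\textbf{Main obstacle.} The delicate step is the uncrossing bookkeeping: I must make the decomposition at a self-crossing $q$ precise at the level of the closed walk $\tau^*$ in $\I$, not merely at the level of the planar curve, so that each piece genuinely corresponds to a closed walk whose image is $\gamma_i$ and whose length is accounted for correctly with multiplicities. The subtlety is that $\gamma^*$ lives in $\bigcup S(\tau^*)$ and a self-crossing may occur either at a combinatorial vertex $x_j=s_j\cap s_{j+1}$ shared by nonconsecutive steps of the walk, or at an interior point where two segments of the walk cross transversally; in either case I need to exhibit the split as a genuine reindexing of $\tau^*$ into two closed walks and verify the additivity $N(\gamma^*;a,b)=N(\gamma_1;a,b)+N(\gamma_2;a,b)$ together with the strict length decrease. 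Getting the edge-multiplicity accounting right so that the excised loop is nonempty—guaranteeing the strict inequality rather than merely $\le$—is where I expect the argument to require the most care.
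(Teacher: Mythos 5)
Your overall route is exactly the paper's: uncross $\gamma^*$ at its first self-intersection, split $\tau^*$ into two closed walks, use additivity of the oriented crossing number $N(\gamma^*;a,b)=N(\gamma(\tau_1);a,b)+N(\gamma(\tau_2);a,b)$ to conclude that one piece lies in $\Pi$, contradict the minimality of $\tau^*$ from Lemma~\ref{le:shortest}, and then read off separation from Lemma~\ref{le:jordan}. The Jordan step of your write-up is correct as is, and so is the $N$-additivity (the two pieces do partition $\gamma^*$ geometrically, with orientations preserved).

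The genuine gap is exactly the point you flag at the end, and as written your resolution of it fails in two ways. First, the split is \emph{not} ``a genuine reindexing of $\tau^*$ into two closed walks'': if the self-crossing is between the edges of $\gamma^*$ lying on $s_i$ and $s_j$ ($i<j$), the crossing point lies in $s_i\cap s_j$, hence $s_is_j\in E(\I)$, and the paper's pieces are $\tau_1=s_is_{i+1}\cdots s_js_i$ and $\tau_2=s_0\cdots s_is_j\cdots s_t$ --- \emph{both} use this chord edge, of weight $w(s_i)+w(s_j)$, which does not appear on $\tau^*$ at all. (This observation also disposes of your two-case worry: whether the crossing is a coincidence of combinatorial vertices or a transversal crossing of interiors, it certifies the chord $s_is_j$, which is what closes the two walks.) So the edge multisets of $\tau_1,\tau_2$ are not a partition of $E(\tau^*)$, and your strictness argument --- ``the edge weights are nonnegative, so $\length$ \dots strictly decreases since an entire nontrivial loop is excised'' --- is a non sequitur twice over: deleting nonnegative-weight edges yields only $\length(\tau_\nu)\le\length(\tau^*)$, never strictness, and here weight is moreover added back via the chord. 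The paper obtains strictness differently, from the fact that $\tau^*$ is a \emph{cycle} (no repeated vertices) together with $j-i\ge 2$, which holds because two straight segments meet in at most one point, so (cyclically) consecutive segments of the walk cannot create a self-crossing: each $\tau_\nu$ is then a cycle on a \emph{proper} subset of $S(\tau^*)$ with strictly fewer edges, and since the length of a cycle equals twice the total weight of its vertex set, $\length(\tau_\nu)<\length(\tau^*)$ follows (for positive weights; with zero-weight segments one would in addition break ties by the edge count, which also strictly drops). Your proof needs this chord-plus-cycle accounting substituted for the ``excision'' reasoning; everything else matches the paper.
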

\begin{proof}
Assume, for the sake of contradiction, that $\gamma^*$ is not simple.
	It is then possible to show the existence of two cycles $\tau_1$ and $\tau_2$ in $\I$ such that
	$\length(\tau_1)<\length(\tau^*)$,
	$\length(\tau_2)<\length(\tau^*)$,
	and $N(\gamma^*;a,b)= N(\gamma(\tau_1);a,b)+ N(\gamma(\tau_2);a,b)$,
	as follows.
	
	\begin{figure}
		\centering
			\includegraphics[width=0.8\textwidth]{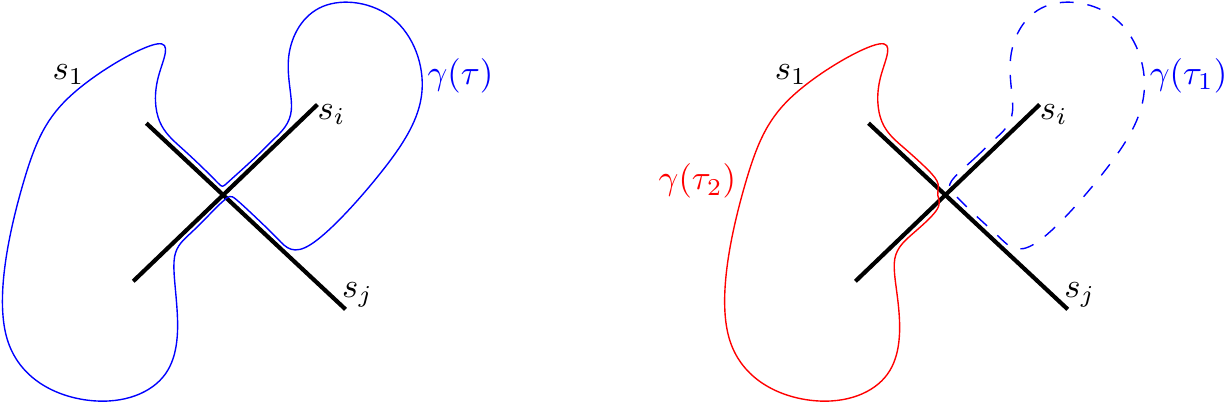}
			\caption{The polygonal paths defined by the cycles $\tau_1$ and $\tau_2$ from the cycle $\tau$ in Lemma~\ref{le:feasible}.}
			\label{fig:partition}
	\end{figure}	

	Let $s_0 s_1s_2\dots s_t$, with $s_t=s_0$, be the cycle $\tau^*$.
	Start walking along $\gamma^*$ from $s_0\cap s_1$, until we find the first self-intersection,
	which is defined by segments $s_i$ and $s_j$, with $i< j$.
	Note that $2\le j-i$ because $s_i$ and $s_{i+1}$ cannot define a self-intersection of $\gamma^*$.
	Consider the cycles $\tau_1 = s_i s_{i+1}\dots s_j s_i$
	and $\tau_2=s_0\dots s_i s_j \dots s_t$. See Fig.~\ref{fig:partition}.
	Note that
	\[
		N(\gamma^*;a,b) ~~=~~ N(\gamma(\tau_1);a,b) + N(\gamma(\tau_2);a,b) 
	\]
	because the polygonal paths $\gamma(\tau_1)$ and $\gamma(\tau_2)$ form a disjoint partition
	of $\gamma^*$, with orientations preserved.
	Moreover, because $j-i\ge 2$ and $\tau^*$ is a cycle,
	we have $\length(\tau_1)< \length (\tau^*)$ and $\length(\tau_2)< \length (\tau^*)$.
	This finishes the proof of existence of $\tau_1$ and $\tau_2$.

	Because $\tau^*\in \Pi$ we have 
	\[
		0~~ \not= ~~ N(\gamma^*;a,b)~~ =~~ N(\gamma(\tau_1);a,b) + N(\gamma(\tau_2);a,b) .
	\]
	Therefore, $N(\gamma(\tau');a,b)\not = 0$ for some $\tau'\in\{ \tau_1,\tau_2\}$. 
	Since $\length(\tau')<\length (\tau^*)$ and $N(\gamma(\tau');a,b)\\ \not = 0$,
	then $\tau'\in \Pi$. This contradicts the property that $\tau^*$ is a shortest
	cycle of $\Pi$ (Lemma~\ref{le:shortest}).
	We conclude that $\gamma^*$ must be simple.
		
	Since $\gamma^*$ is simple, $N(\gamma^*)\in \{ -1,0,+1\}$ by Lemma~\ref{le:jordan}.
	Since $\tau^*\in \Pi$, then $N(\gamma^*)\not = 0$, which implies $N(\gamma^*)= \pm 1$.
	It then follows from Lemma~\ref{le:jordan} that $\gamma^*$ separates $a$ and $b$.\qedhere	
\end{proof}

We can now prove the main theorem.
\begin{theorem}\label{thm:separation}
	The weighted version of {\sc 2-Cells-Separation} can be solved in $\mathcal{O}(nk+n^2\log n)$ time,
	where $n$ is the number of input segments and $k$ is the number of pairs of segments that intersect.
\end{theorem}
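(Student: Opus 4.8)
The plan is to assemble the pieces already established into a correctness-and-running-time argument. First I would fix the algorithm's output: the algorithm returns the vertex set $S(\cycle(T_{r^*},e^*)) = S(\tau^*)$, and I need to show this is a minimum-weight separator of $a$ and $b$. The running-time bound $\mathcal{O}(nk+n^2\log n)$ is immediate from Lemma~\ref{le:time}, so the real content is correctness, which splits into two halves: \emph{feasibility} (the returned set genuinely separates $a$ and $b$) and \emph{optimality} (no separator has smaller weight).

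For feasibility, I would invoke Lemma~\ref{le:feasible}: the polygonal path $\gamma^* = \gamma(\tau^*)$ is simple and separates $a$ and $b$. Since $\gamma^*$ is contained in $\bigcup S(\tau^*)$, removing the segments in $S(\tau^*)$ leaves a closed curve drawn on them that encloses exactly one of $a,b$; hence every $a$-$b$ path must cross some segment of $S(\tau^*)$, i.e.\ $S(\tau^*)$ is a valid separator. The weight of this separator is at most $w(S(\tau^*)) \le \length(\tau^*)/2 \cdot (\text{const})$; more precisely, since each edge weight along $\tau^*$ is $w(s)+w(s')$, each segment on the cycle is counted in exactly two incident edges, so $\length(\tau^*) = 2\,w(S(\tau^*))$, giving the clean identity I would record.

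For optimality — the step I expect to be the crux — I would argue the converse: any minimum-weight separator $S'$ gives rise to a closed walk in $\I$ belonging to $\Pi$ whose length is $2\,w(S')$, so that the shortest element of $\Pi$ (which is $\tau^*$ by Lemma~\ref{le:shortest}) has length at most $2\,w(S')$, yielding $w(S(\tau^*)) \le w(S')$. The key geometric observation is that if $S'$ separates $a$ and $b$, then within $\bigcup S'$ there is a closed curve separating $a$ and $b$; tracing this curve as it passes from segment to segment of $S'$ produces a closed walk $\pi$ in $\I$ with $S(\pi)\subseteq S'$ and, because the traced curve separates $a$ and $b$, with $N(\gamma(\pi);a,b)$ odd and in particular nonzero, so $\pi\in\Pi$. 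Its length is bounded by $2\,w(S(\pi)) \le 2\,w(S')$. The delicate point is establishing the existence of such a separating closed curve confined to $\bigcup S'$ and of the right parity — this is where the topological nature of separation (a boundary in $\mathbb{Z}_2$-homology between the cell of $a$ and the cell of $b$) enters, and I would justify it by noting that the boundary of the cell containing $a$ in $\A(S')$ consists of portions of segments of $S'$ and thus lies in $\bigcup S'$.

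Combining the two inequalities $w(S(\tau^*)) \le w(S')$ and the feasibility of $S(\tau^*)$ shows $S(\tau^*)$ is an optimal separator, and the $\mathcal{O}(nk+n^2\log n)$ time follows from Lemma~\ref{le:time}. I would close by remarking that the whole argument is a specialization of the fundamental-cycle method to the family $\Pi$: the 3-path condition (Lemma~\ref{le:3-path}) guarantees that restricting attention to fundamental cycles $\cycle(T_r,e)$ loses nothing, and Lemma~\ref{le:jordan} is what converts the combinatorial condition $N(\gamma;a,b)\neq 0$ into the genuine topological statement of separation.
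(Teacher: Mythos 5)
Your overall skeleton matches the paper for feasibility (Lemma~\ref{le:feasible} plus the containment $\gamma^*\subseteq \bigcup S(\tau^*)$) and for the running time (Lemma~\ref{le:time}), but your optimality argument takes a different route from the paper's, and as written it has a genuine gap at its crux. You trace a separating closed curve inside $\bigcup S'$ to obtain a closed \emph{walk} $\pi$ in $\I$ and then claim $\length(\pi)\le 2\,w(S(\pi))\le 2\,w(S')$. That inequality is false for walks: $\length$ counts edge weights \emph{with multiplicity}, and the boundary trace of the cell of $a$ in $\A(S')$ can revisit the same segment (indeed the same edge of $\I$) many times, since a single segment can contribute many disjoint pieces to one cell's boundary. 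The identity $\length = 2\,w(S(\cdot))$ holds only for cycles, where each vertex is incident to exactly two cycle edges. So your comparison $\length(\tau^*)\le \length(\pi)\le 2\,w(S')$ collapses, and with it the bound $w(S(\tau^*))\le w(S')$. A second, related soft spot is the step you yourself flagged as delicate: cell boundaries in $\A(S')$ need not be simple closed curves (they can have cut points, dangling segment pieces, and several components), so the existence of a closed walk with $N(\gamma(\pi);a,b)$ odd needs a real argument, not just the observation that the boundary lies in $\bigcup S'$.

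The paper avoids both issues with a self-application trick: it runs the very same algorithm on the optimal set $S_*$. Lemma~\ref{le:feasible}, applied to that run, directly yields a \emph{cycle} $\tau_*$ in the intersection graph of $S_*$ (hence of $\I$) with $\gamma(\tau_*)$ simple and separating; Lemma~\ref{le:jordan} then gives $N(\gamma(\tau_*);a,b)=\pm 1$, so $\tau_*\in\Pi$ for the original instance, and because $\tau_*$ is a cycle, $\length(\tau_*)=2\,w(S(\tau_*))\le 2\,w(S_*)$, after which Lemma~\ref{le:shortest} closes the argument exactly as you intended. You could repair your route by repeatedly splitting your walk at repeated vertices and arguing, as in the uncrossing step of Lemma~\ref{le:feasible}, that some piece stays in $\Pi$ until a cycle remains --- but that amounts to rebuilding the paper's machinery by hand; invoking the already-proved lemmas on the subinstance $S_*$ is both shorter and dispenses with constructing the separating curve explicitly.
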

\begin{proof}
	We use the algorithm described in Section~\ref{sec:algorithm}.
	The algorithm returns a feasible solution because of Lemma~\ref{le:feasible}:
	the cycle $\gamma^*$ separates $a$ and $b$ and
	is contained in $\bigcup S(\cycle(T_{r^*},e^*))$, therefore, 
	the set $S(\cycle(T_{r^*},e^*)$ returned by the algorithm separates $a$ and $b$.
	
	To see the optimality of the weight of $S(\tau^*)$, consider an optimal solution $S_*\subseteq S$.
	Assume that we run the algorithm on $S_*$.
	The algorithm would compute a cycle $\tau_*$ in the intersection graph of the segments $S_*$
	and return $S(\tau_*)\subseteq S_*$. By Lemma~\ref{le:feasible},
	the polygonal path $\gamma(\tau_*)$ is simple and separates $a$ and $b$.
	Lemma~\ref{le:jordan} implies that $N(\gamma(\tau_*);a,b)=\pm 1 \not =0$,
	and therefore $\tau_*\in \Pi$ (here $\Pi$ refers to the original problem, rather than the subproblem
	defined by input $S_*$).
	
	For any cycle $\pi$ of $\I$ we have $\length(\pi)= 2 |S(\pi)|$ because of the choice of the edge-weights in $\I$. 
	Since $\tau^*$ is a shortest cycle in $\Pi$ by Lemma~\ref{le:shortest},
	we have
	\[ 
		w(S(\tau^*))~~=~~\tfrac 12 \length (\tau^*) 
					~~\le~~ \tfrac 12 \length(\tau_*) 
					~~=~~ w(S(\tau_*))
					~~\le~~ w(S_*).
	\]
	It follows that $S(\tau^*)$ is a feasible solution whose weight is not larger than $w(S_*)$,
	and therefore it is optimal.
	The running time follows from Lemma~\ref{le:time}.
\end{proof}

\begin{corollary}\label{co:separation}
	The weighted version of {\sc $2$-Cells-Separation} in which the segments have weights $0$ or $1$ 
	can be solved in $\mathcal{O}(n^2+nk)$ time,
	where $n$ is the number of input segments and $k$ is the number of pairs of segments that intersect.
\end{corollary}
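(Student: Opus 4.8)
The plan is to reuse the algorithm from Section~\ref{sec:algorithm} verbatim and to argue that the only thing that changes relative to Theorem~\ref{thm:separation} is the running-time bound. The key observation is that weights in $\{0,1\}$ are a special case of arbitrary non-negative weights, so the entire correctness argument carries over unchanged: Lemmas~\ref{le:3-path}, \ref{le:shortest}, \ref{le:jordan} and~\ref{le:feasible}, together with the optimality argument in the proof of Theorem~\ref{thm:separation}, never use anything about the weights beyond $w(s)\ge 0$. Hence the set $S(\cycle(T_{r^*},e^*))$ returned by the algorithm is again a minimum-weight separator of $a$ and $b$, and I do not need to reprove any of this.

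What remains is to obtain the sharper time bound. In the proof of Theorem~\ref{thm:separation} the running time is entirely dominated by the computation of the pair $(r^*,e^*)$, which Lemma~\ref{le:time} performs in $\mathcal{O}(nk+n^2\log n)$ time. I would simply invoke Lemma~\ref{le:time2} in its place: when every segment has weight $0$ or $1$, the edge weights of $\I$ lie in $\{0,1,2\}$, so each shortest-path tree $T_r$ can be built in $\mathcal{O}(k+n)$ time without the logarithmic overhead of a general priority queue, while the remaining per-root bookkeeping (the BFS computation of the values $N_r(\cdot)$ and $C_r(\cdot)$, the $\lca$ data structure, and the evaluation of $\length(\cycle(T_r,e))$ for each non-tree edge) stays $\mathcal{O}(k+n)$ exactly as in Lemma~\ref{le:time}. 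Summing over the $n$ choices of root $r$ yields $\mathcal{O}(nk+n^2)$ time for computing $(r^*,e^*)$, and hence for the whole algorithm.

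Combining the two observations gives the claimed $\mathcal{O}(n^2+nk)$ bound. There is essentially no obstacle here: the corollary is a direct specialization of Theorem~\ref{thm:separation} in which the generic shortest-path subroutine is replaced by a linear-time one for small integer edge weights. The only point worth stating explicitly is that the speed-up is confined to the shortest-path-tree step and that the correctness is inherited wholesale from the non-negative-weight case.
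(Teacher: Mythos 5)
Your proposal is correct and matches the paper's proof exactly: the paper proves Corollary~\ref{co:separation} by rerunning the proof of Theorem~\ref{thm:separation} with Lemma~\ref{le:time2} substituted for Lemma~\ref{le:time}, which is precisely your argument (your justification of the $\mathcal{O}(k+n)$ per-root bound is just the content of Lemma~\ref{le:time2} itself). Nothing further is needed.
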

\begin{proof}
	In the proof of the previous theorem we use Lemma~\ref{le:time2} instead of Lemma~\ref{le:time}.
\end{proof}

In the case where the segments of $S$ are unweighted, the points $a, b$ are inside a polygon $P$ with holes, and the $a$-$b$ path must be contained in the interior of $P$, the problem can be easily solved by 
assigning weight 0 to the edges $E(P)$ of the polygon $P$
and weight $1$ to the segments in $S$. 
We can then apply Corollary~\ref{co:separation} on $S\cup E(P)$, and obtain the following.

\begin{corollary}
The restricted {\sc $2$-Cells-Separation} problem in a polygon with holes can be solved in $\mathcal{O}(n^2+nk)$ time, where $n$ is the total size of the input
and $k$ is the number of pairs of segments in $S$ that intersect.
\end{corollary}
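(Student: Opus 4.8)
The plan is to verify that the reduction just described---assign weight $0$ to the boundary edges $E(P)$ and weight $1$ to the segments of $S$, then apply Corollary~\ref{co:separation} to $S\cup E(P)$---is both correct and stays within the claimed bound. The guiding idea is that the zero-weight boundary of $P$ lets the algorithm close off a separating barrier along $\partial P$ at no cost, which is exactly what confines the competing $a$-$b$ paths to the interior of $P$.

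For correctness I would establish that minimum restricted separators of $(S,P)$ correspond to minimum-weight separators of the instance $S\cup E(P)$, and that their costs agree. In one direction, given a restricted separator $S'\subseteq S$, I claim $S'\cup E(P)$ separates $a$ and $b$ in the ordinary planar sense: any $a$-$b$ path that stays in the interior of $P$ must cross $S'$, while any path that leaves $P$ must cross $\partial P=E(P)$, and since $a$ and $b$ both lie in the interior of $P$ every path falls into one of these two cases. As $E(P)$ carries zero weight, the weight of $S'\cup E(P)$ equals $|S'|$. For the converse, let $R\subseteq S\cup E(P)$ be the separator returned by the algorithm and set $S'=R\cap S$; an interior $a$-$b$ path avoiding $S'$ cannot meet $E(P)$ either, hence avoids all of $R$, contradicting that $R$ separates $a$ and $b$, so $S'$ is a restricted separator. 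Since every cycle $\pi$ of the intersection graph satisfies $\length(\pi)=2\,w(S(\pi))$, the weight of the returned separator equals $|R\cap S|$; combining the two directions shows the two optima coincide, so the set $R\cap S$ output by the algorithm is an optimal restricted separator.

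For the running time I would bound the size of the reduced instance. With $n$ the total input size we have $|S|+|E(P)|=\mathcal{O}(n)$, so $S\cup E(P)$ has $\mathcal{O}(n)$ segments. The number $k'$ of intersecting pairs in $S\cup E(P)$ splits into pairs within $S$ (exactly $k$), consecutive boundary edges meeting at polygon vertices ($\mathcal{O}(n)$ pairs), and pairs between $S$ and $E(P)$. Here I would invoke the hypothesis that each segment of $S$ lies in $P$ with both endpoints on $\partial P$: such a segment meets the boundary only at the $\mathcal{O}(1)$ edges containing its endpoints, contributing $\mathcal{O}(n)$ segment--boundary pairs overall. Hence $k'=k+\mathcal{O}(n)$, and Corollary~\ref{co:separation} runs in $\mathcal{O}(|S\cup E(P)|^2+|S\cup E(P)|\cdot k')=\mathcal{O}(n^2+n(k+n))=\mathcal{O}(n^2+nk)$ time.

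The step I expect to require the most care is the cost bookkeeping in the correctness equivalence: one must confirm that confining paths to the interior of $P$ is modeled faithfully by giving the boundary zero weight, so that the weight of any separator of $S\cup E(P)$ equals the cardinality of its intersection with $S$, with no unexpected contribution from the interaction between $S$ and $\partial P$. One should also check that the general-position and endpoint-on-boundary assumptions remain compatible with the hypotheses of the earlier lemmas after adding $E(P)$. The running-time accounting is then routine, once the endpoint-on-boundary condition is used to bound the crossings between $S$ and $E(P)$.
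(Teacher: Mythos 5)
Your proposal is correct and follows exactly the paper's route: assign weight $0$ to the boundary edges $E(P)$ and weight $1$ to the segments of $S$, then apply Corollary~\ref{co:separation} to $S\cup E(P)$. The paper states this reduction in one sentence without further justification, so your verification of the two-way correspondence between restricted separators and minimum-weight separators of $S\cup E(P)$, and your accounting that the number of intersecting pairs only grows to $k+\mathcal{O}(n)$, merely fill in details the paper leaves implicit.
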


%%%%%%%%%%%%%%%%%%%%%%%%%%%%%%%%%%%%%%%%%%%%%%%%%%%%%%%%%%%%%%%%%%%%%%%%%%%%%%%%%%%%

\section{Connecting two cells}
\label{2cells}

We show that {\sc $2$-Cells-Connection} is NP-hard and APX-hard by a reduction from {\sc Exact-Max-$2$-Sat}, a well studied NP-complete and APX-complete problem(c.f.~\cite{Hastad01}):
Given a propositional CNF formula $\Phi$ with $m$ clauses on $n$ variables and exactly two variables per clause, decide whether there exists a truth assignment that satisfies at least $k$ clauses, for a given $k\in\mathbb{N}$, $k\leq m$.
Let $x_1,\dots,x_n$ be the variables of $\Phi$, $\ell_i$ be the number of appearances of variable $x_i$ in $\Phi$, and $\ell= \sum_i \ell_i$; since each clause contains exactly 2 variables, $\ell=2m$. The maximum number of satisfiable clauses is denoted by ${\rm opt}(\Phi)$.
Using $\Phi$ we construct an instance consisting of a set of segments $S=S(\Phi)$ and two points $a=a(\Phi)$ and $b=b(\Phi)$ as follows.

Abusing the terminology slightly, the term \emph{segment} will refer to a set of identical single segments stacked on top of each other. The cardinality of the set is the \emph{weight} of the segment. Either all or none of the single segments in the set can be crossed by a path. 
There are two different types of segments, $\tau_1$, and $\tau_{\infty}$, according to their weight. Segments of type $\tau_1$ have weight $1$ (light or single segments), while segments of type $\tau_{\infty}$ have weight $20m$ (heavy segments). The weight of heavy segments is chosen so that they are never crossed by an optimal $a$-$b$ path.

\begin{figure}
\centering
\includegraphics[width=.43\textwidth]{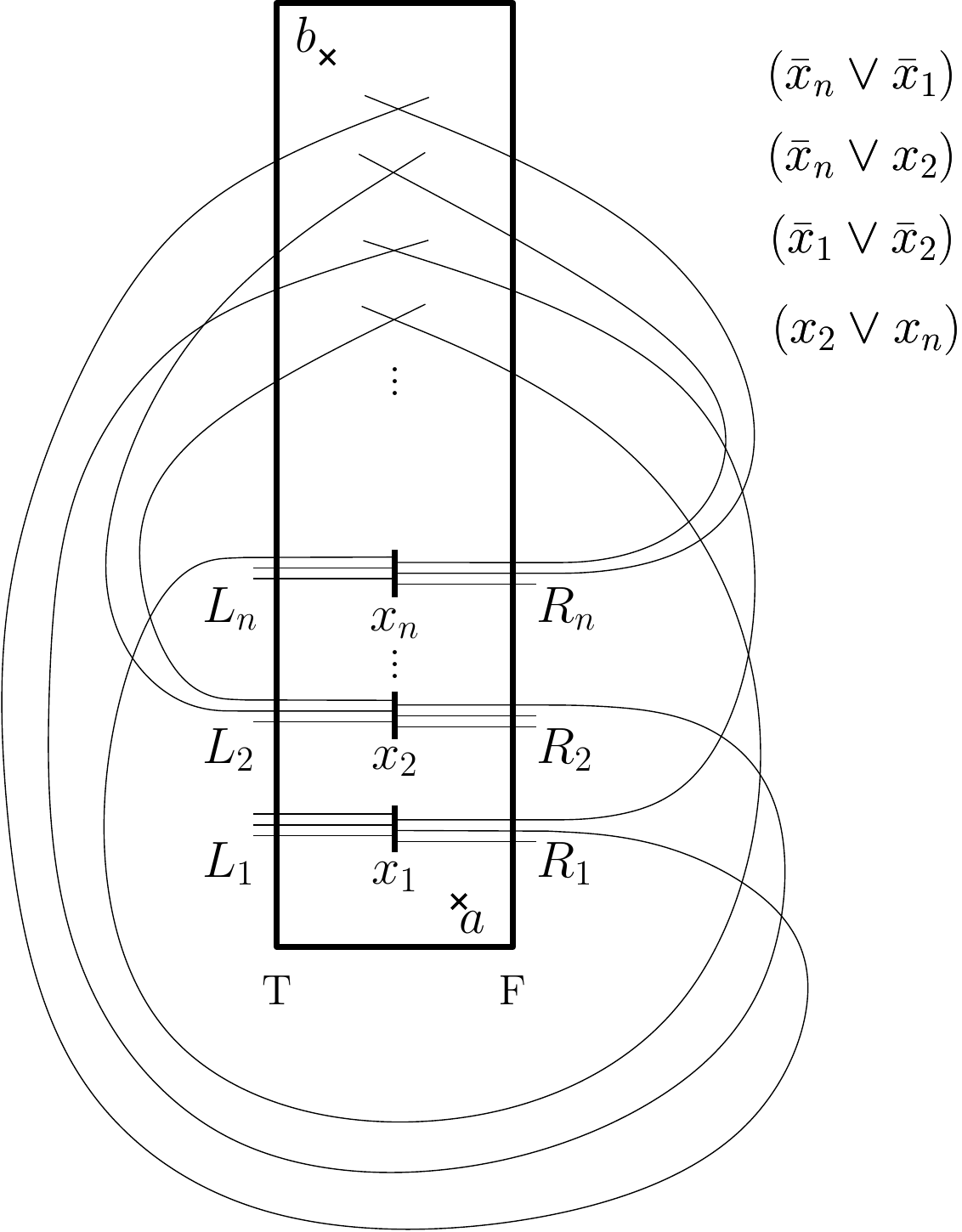}
\caption{Idea of the construction with curved segments.}
\label{idea}
\end{figure}

We first provide an informal, high-level description of the construction that uses \emph{curved} segments. Later on, each curved segment will by replaced by a collection of straight-line segments in an appropriate manner. See Fig.~\ref{idea}. We have a rectangle $R_\infty$ made of heavy segments, with point $a$ at a lower corner and $b$ at an upper corner. For each variable $x_i$ we add a small vertical segment of type $\tau_{\infty}$ in the lower half of $R_\infty$.
From the segment we place $\ell_i$ horizontal light segments, denoted by $R_i$, going to the right and $\ell_i$ horizontal light segments, denoted by $L_i$, going to the left until they reach the outside of $R_{\infty}$. Roughly speaking, (things are slightly more complicated)
an optimal $a$-$b$ path will have to choose for each $x_i$ whether it crosses all segments in $L_i$, encoding the assignment $x_i=\true$, or all segments in $R_i$, encoding the assignment $x_i=\false$.
Consider a clause like $x_2\vee x_{n}$, where both literals are positive.  We prolong one of the segments of $L_2$ and one of the segments of $L_n$ with a curved segment so that they cross again inside $R_\infty$ (upper half) in such a way that an $a$-$b$ path inside $R_\infty$ must cross one of the prolongations, and one is enough; see Fig.~\ref{idea}, where one of the prolongations passes below $R_\infty$. A clause like $\bar x_n\vee x_2$ is represented using prolongations of one segment from $L_2$ and one segment of $R_n$. The other types of clauses are symmetric. For each clause we always prolong different segments; since $L_i$ and $R_i$ have $\ell_i$ segments, there is always some segment that can be prolonged. It will then be possible to argue that the optimal $a$-$b$ path has cost $\ell+ (m-{\rm opt}(\Phi))$. 
We do not provide a careful argument of this here since we will need it later for a most complicated scenario. This finishes the informal description of the idea.

We now describe in detail the construction with straight-line segments.
First, we construct a polygon, called the \emph{tunnel}, with heavy boundary segments of type $\tau_{\infty}$; see Fig.~\ref{chain}(a).  
The tunnel has a `zig-zag' shape and can be seen as having $8$ corridors, $C_1,\ldots ,C_8$. It starts with $C_1$,  the \emph{main} corridor (at the center of the figure), which contains point $a$, then it turns left to  $C_2$, then right, etc., gradually turning around to $C_7$ and then to the \emph{end} corridor $C_8$ (at the top). The latter contains point $b$. To facilitate the discussion, we place a point $b'$ in the tunnel where the transition from $C_7$ to the end corridor occurs. 
The tunnel has a total weight of $21\cdot 20m =\mathcal{O}(m)$.
The rest of the construction will force any $a$-$b$ path of some particular cost (to be given shortly) to stay always in the interior of the tunnel. 

\begin{figure}
\centering
\includegraphics[width=.88\textwidth]{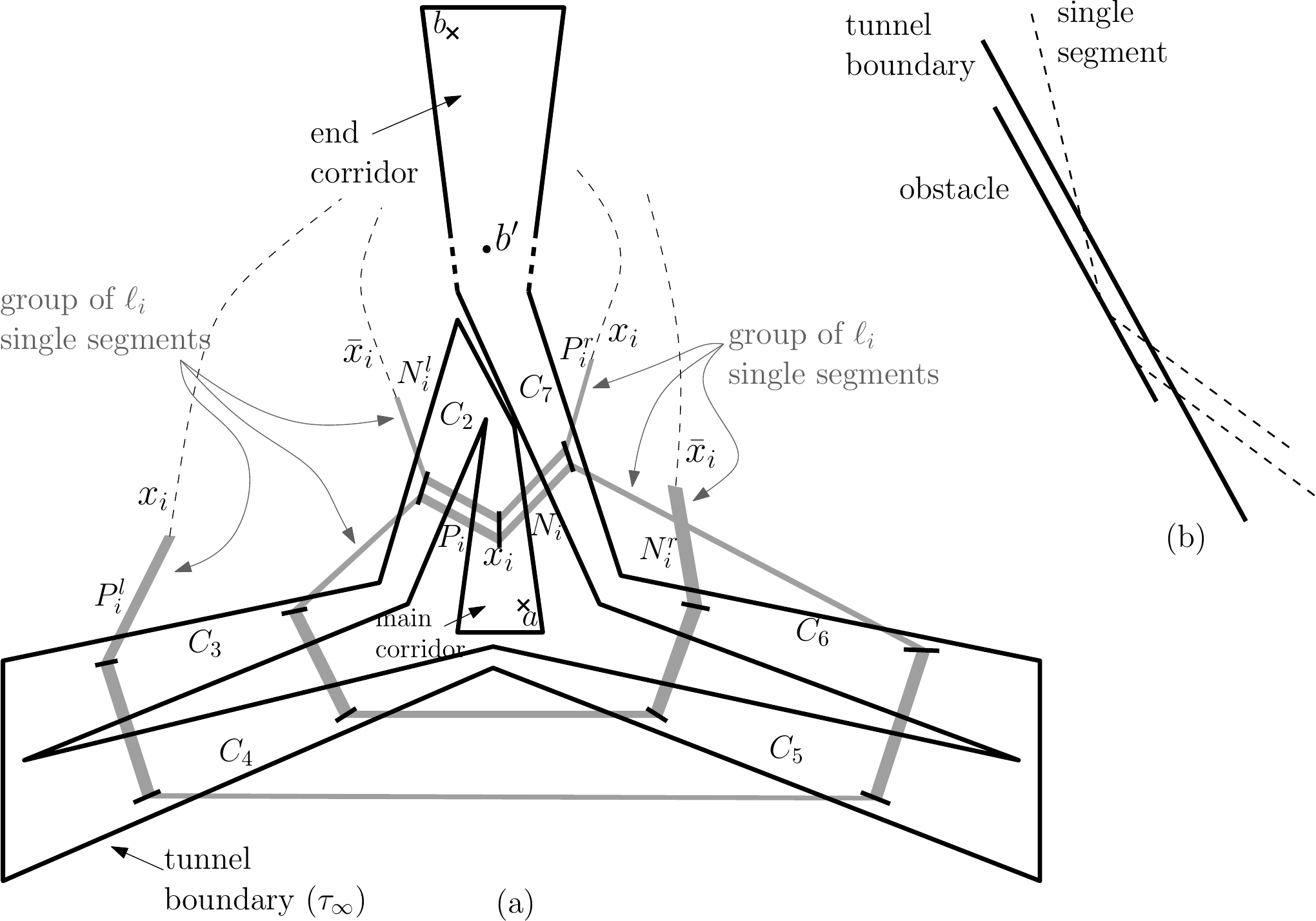}
\caption{(a) Tunnel and variable chain. Each gray trapezoid represents a piece with $\ell_i$ parallel segments. (b) Part of a chain piece close to the tunnel.}
\label{chain}
\end{figure}

Each variable $x_i$ of $\Phi$ is represented by a collection of 16 pieces, which form a chain-like structure. Each \emph{piece} is a group of $\ell_i$ nearly-parallel single segments whose ends are either outside the tunnel or lie on `short' heavy segments of type $\tau_{\infty}$ in the interior of the tunnel, referred to as \emph{obstacles}. For each variable, there is one obstacle in each of the corridors $C_1$, $C_2$, $C_7$ and there are two obstacles in each of the corridors $C_3$, $C_4$, $C_5$, and $C_6$. See Fig.~\ref{chain}(a), where we represent each piece by a light gray trapezoid and each obstacle by a bold, short segment. Pieces always contain a part outside the tunnel. The exact description of the structure is cumbersome; we refer the reader to the figures. The obstacle in $C_2$ contains the extremes of four pieces: two pieces, called $P_i$, go to the obstacle in the main corridor, one goes to an obstacle in $C_3$, and the fourth piece, which we call $N_i^l$ goes outside the tunnel. Symmetrically, the obstacle in $C_7$ contains the extremes of four pieces: two pieces, called $N_i$, go to the main corridor, one goes to the corridor $C_6$, and one, which we call $P_i^r$ goes outside the tunnel. We add pieces connecting the obstacles in $C_3$ and $C_4$, the obstacles in  $C_4$ and $C_5$, and the obstacles in $C_5$ and $C_6$. From the obstacle in $C_3$ that currently has one piece we add another piece, which we call $P_i^l$ and whose other extreme is outside the tunnel. From the obstacle in $C_6$ that currently has one piece we add another piece, which we call $N_i^r$, whose other extreme is outside the tunnel. 

The obstacles and the pieces of all variables should satisfy some conditions: obstacles should be disjoint, pieces can touch only the obstacles at their extremes, and pieces may cross only outside the tunnel. See Fig.~\ref{complete}. Some of the single segments of $P_i^r$, $P_i^l$, $N_i^r$, $N_i^l$ will be prolonged and rotated slightly to encode the clauses. For this, we will need that the line supporting a segment from $P_i^r\cup N_i^r$ intersects inside the end corridor the line supporting a segment from $P_j^l\cup N_j^l$. This can be achieved by stretching the end corridor sufficiently and placing the obstacles of $C_2$ and $C_7$ close to the tunnel boundary; see Fig.~\ref{chain}(b).

\begin{figure}
\centering
\includegraphics[width=.86\textwidth]{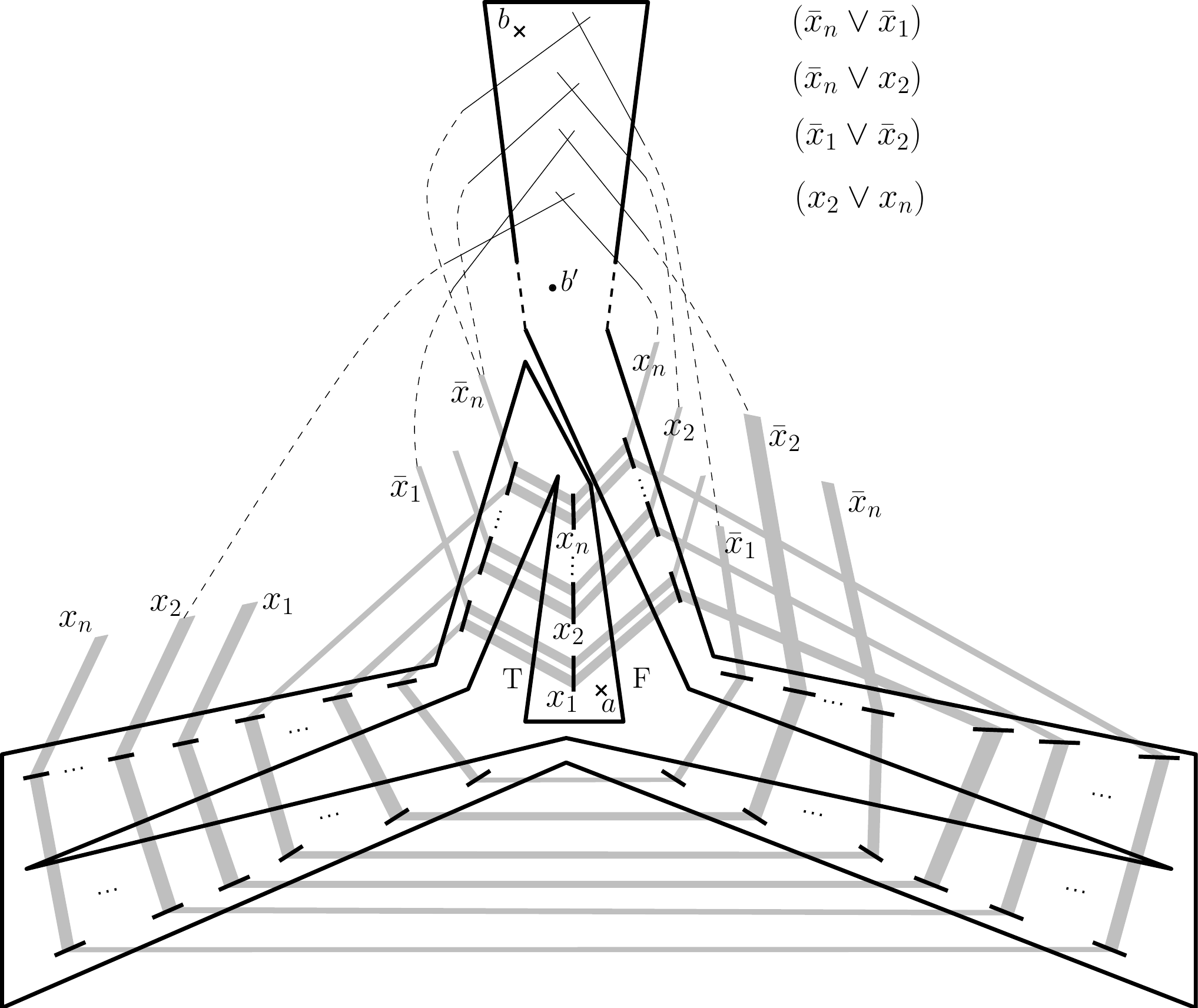}
\caption{Example of overall construction.}
\label{complete}
\end{figure}

For each clause of $\Phi$ we prolong two segments of $P_i^r\cup P_i^l \cup  N_i^r\cup N_i^l$ as follows; see Fig.~\ref{complete} for an example of the overall construction, where prolongations are shown by dashed lines. Each segment corresponds to some literal $x_i$ or $\bar x_i$ in the clause: in the first case the segment comes from either $P_i^r$ or $P_i^l$, while in the second one it comes from either $N_i^r$ or $N_i^l$. For the construction, these choices for each clause can be made arbitrarily, provided that one segment intersects the tunnel from the left side and the other one from the right. These segments are prolonged until their intersection point inside the end corridor. For each clause, two different segments are prolonged. Since the pieces corresponding to variable $x_i$ have $\ell_i$ segments, there is always some segment available. 
Segments corresponding to different clauses may intersect only outside the tunnel; this is ensured by rotating the segments slightly around the endpoint lying in the obstacle. In this way, the end corridor is obstructed by $m$ pairs of intersecting segments such that any path from the intermediate point $b'$ to point $b$ staying inside the tunnel must intersect at least one segment from each pair.  

The following lemma establishes the correctness of the reduction. 

\begin{lemma}\label{sec2_lem:bothdirections}
There is an $a$-$b$ path of cost at most $8\ell + k$, where $1\le k \le m$, if and only if 
there is a truth assignment satisfying at least $(m-k)$ of the clauses.
\end{lemma}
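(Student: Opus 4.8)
The plan is to prove the two directions of the biconditional separately, with the bulk of the work going into carefully accounting for how an optimal $a$-$b$ path must traverse the tunnel and interact with the variable chains and clause prolongations.

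\medskip
\noindent
\textbf{Setting up the cost accounting.} First I would argue that any $a$-$b$ path of cost at most $8\ell + k \le 8\ell + m$ must stay entirely inside the tunnel. Since the tunnel boundary consists of heavy segments of type $\tau_\infty$ with weight $20m$ each, and since $8\ell + m = 16m + m = 17m < 20m$, crossing even a single boundary segment would already exceed the allowed budget. Similarly, the obstacles are of type $\tau_\infty$, so the path never crosses an obstacle. This confines the path to the interior corridors and forces it to go from $a$ in the main corridor $C_1$, through the zig-zag, to $b'$ (the $C_7$-to-$C_8$ transition) and then to $b$ in the end corridor $C_8$. The heart of the argument is then to show that the cost decomposes cleanly: the part of the path from $a$ to $b'$ must cross at least $8\ell$ single segments coming from the variable chains, and the part from $b'$ to $b$ must cross at least one segment from each of the $m$ clause pairs in the end corridor.

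\medskip
\noindent
\textbf{The forward direction (assignment $\Rightarrow$ path).} Given a truth assignment satisfying at least $m-k$ clauses, I would construct an explicit $a$-$b$ path and bound its cost by $8\ell + k$. For each variable $x_i$, the path routes through the chain in one of two canonical ways depending on whether $x_i$ is true or false; this choice determines, for each of the $16$ pieces, which pieces the path crosses fully. The key count is that the canonical routing for each variable crosses exactly $8\ell_i$ single segments (summing to $8\ell$ over all variables), encoding the truth value via whether the left-going or right-going pieces are crossed. For the clause part in the end corridor, I would check that a clause satisfied by the assignment contributes $0$ extra crossings (because the prolonged segment corresponding to the satisfied literal has already been counted, or the path can avoid the pair), whereas an unsatisfied clause forces one extra crossing. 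Hence at most $k$ clauses are unsatisfied, giving total cost at most $8\ell + k$. The delicate part here is verifying the geometry of the prolongations: that a satisfied literal's prolongation is consistent with the chosen routing so no double-counting or extra cost is incurred.

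\medskip
\noindent
\textbf{The reverse direction (path $\Rightarrow$ assignment).} Conversely, given an $a$-$b$ path of cost at most $8\ell + k$, I would extract a truth assignment. Using the confinement established above, I would show that for each variable the path's interaction with the chain pieces forces a ``consistent'' crossing pattern costing at least $8\ell_i$, and that any inconsistency (crossing both left and right pieces in a way that deviates from a canonical routing) only increases the cost. This lets me read off a truth value $x_i$ from the path's behavior. The crucial step is a parity/separation argument showing that for each of the $m$ clause pairs in the end corridor, the path crosses at least one segment of the pair unless the corresponding literal was already ``paid for'' by the variable routing in a way consistent with satisfaction. Since the total clause-induced excess over $8\ell$ is at most $k$, at most $k$ clauses can fail to be satisfied, so the assignment satisfies at least $m-k$ clauses.

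\medskip
\noindent
\textbf{Main obstacle.} I expect the hardest part to be the bookkeeping in the reverse direction: ruling out ``cheating'' paths that cross chain segments in non-canonical ways to save on the clause crossings, and proving rigorously that such deviations never reduce the total cost below the clean $8\ell + (\text{unsatisfied clauses})$ bound. This requires a careful invariant on how the path enters and exits each corridor and each obstacle region, essentially tracking the path's homotopy type within the tunnel. The $16$-piece chain structure and the placement of obstacles across corridors $C_3$–$C_6$ are precisely engineered to enforce this rigidity, so the argument must exploit those structural details rather than treating the path combinatorially in a coarse way.
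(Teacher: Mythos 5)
Your skeleton matches the paper's: the forward direction (pick $S_i^{b_i}$ for each variable, $8\ell_i$ segments each, plus one segment of the pair $T_j$ for each unsatisfied clause $C_j$, giving a removable set of size at most $8\ell+k$) is exactly the paper's construction, and your budget argument confining any path of cost at most $17m$ to the tunnel interior is its first step. The genuine gap is in the reverse direction, and you have flagged it yourself: you never supply the mechanism that rules out ``cheating'' paths, and the fix you sketch (tracking the path's homotopy type through the corridors) is not what the paper does and is heavier than needed. The paper's device is a simple charging inequality. Let $S'$ be the set of segments crossed by the path; extract the assignment by the deliberately asymmetric rule $b_i=\true$ iff $P_i\subset S'$ (if $P_i\not\subset S'$ then $N_i\subset S'$, but both may be crossed -- such ``inconsistent'' routings need not be excluded at all). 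Let $A_i$ be the two outer pieces of the opposite polarity ($N_i^l\cup N_i^r$ if $b_i=\true$, else $P_i^l\cup P_i^r$). Inspection of the chain shows
\[
|S'\cap S_i|~\ge~ 8\ell_i + |S'\cap A_i|,
\]
because after deleting $P_i\cup N_i^l\cup N_i^r$ (resp.\ the symmetric set) any $a$-$b'$ path inside the tunnel still crosses at least $6$ pieces. Summing over $i$ against $|S'|\le 8\ell+k$ yields $\sum_i|S'\cap A_i|\le k$. Now charge: the path crosses some segment of every pair $T_j$, and if $C_j$ is unsatisfied by $\{x_i=b_i\}$ that crossed segment lies in $S'\cap A_{k'}$ for some variable $x_{k'}$ of $C_j$; since the $T_j$ are pairwise disjoint, the number of unsatisfied clauses is at most $\sum_i|S'\cap A_i|\le k$. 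This inequality is the missing idea; without it (or an equivalent), your ``any inconsistency only increases the cost'' remains an assertion.

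A related inaccuracy in your setup: the clean decomposition ``the $a$-to-$b'$ part costs at least $8\ell$ and the $b'$-to-$b$ part costs one per clause pair'' cannot be made to work as stated, because the clause segments are themselves members of the pieces $P_i^l, P_i^r, N_i^l, N_i^r$, i.e.\ they belong to the sets $S_i$, and cost counts segments \emph{without multiplicity}. A crossing in the end corridor of a segment from the chosen side $S_i^{b_i}$ is free, having already been paid for in the variable budget -- this is precisely why the excess must be measured through the sets $A_i$ attached to each variable rather than by splitting the path at $b'$. Once the accounting is set up this way, the $16$-piece rigidity you worried about reduces to the single per-variable inspection above, and no invariant on corridor entries and exits is required.
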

\begin{proof}
We denote by $S_i$ the set of segments in the pieces corresponding to the variable $x_i$.
We denote by $S_i^{\rm T}$ the segments in the pieces of $P_i$, the piece connecting $C_7$ to $C_6$, the piece  $P_i^r$, and so on in an alternating manner along the chain structure. Note that $S_i^{\rm T}$ contains $P_i$, $P_i^l$ and $P_i^r$.
We denote by $S_i^{\rm F}$ the segments $S_i\setminus S_i^{\rm T}$. Note that $S_i^{\rm F}$ contains $N_i$, $N_i^l$ and $N_i^r$.
See Fig.~\ref{chain-posneg}.
Each of the sets $S_i^{\rm T}$ and $S_i^{\rm F}$ contains $8\ell_i$ segments. Inside the tunnel there is an $a$-$b'$ path disjoint from $S_i^{\rm T}$ and there is another $a$-$b'$ path disjoint from $S_i^{\rm F}$. We also denote by $T_j$ the two segments used for clause $C_j$ of $\Phi$.

\begin{figure}
\centering
\hfill
\includegraphics[width=.49\textwidth]{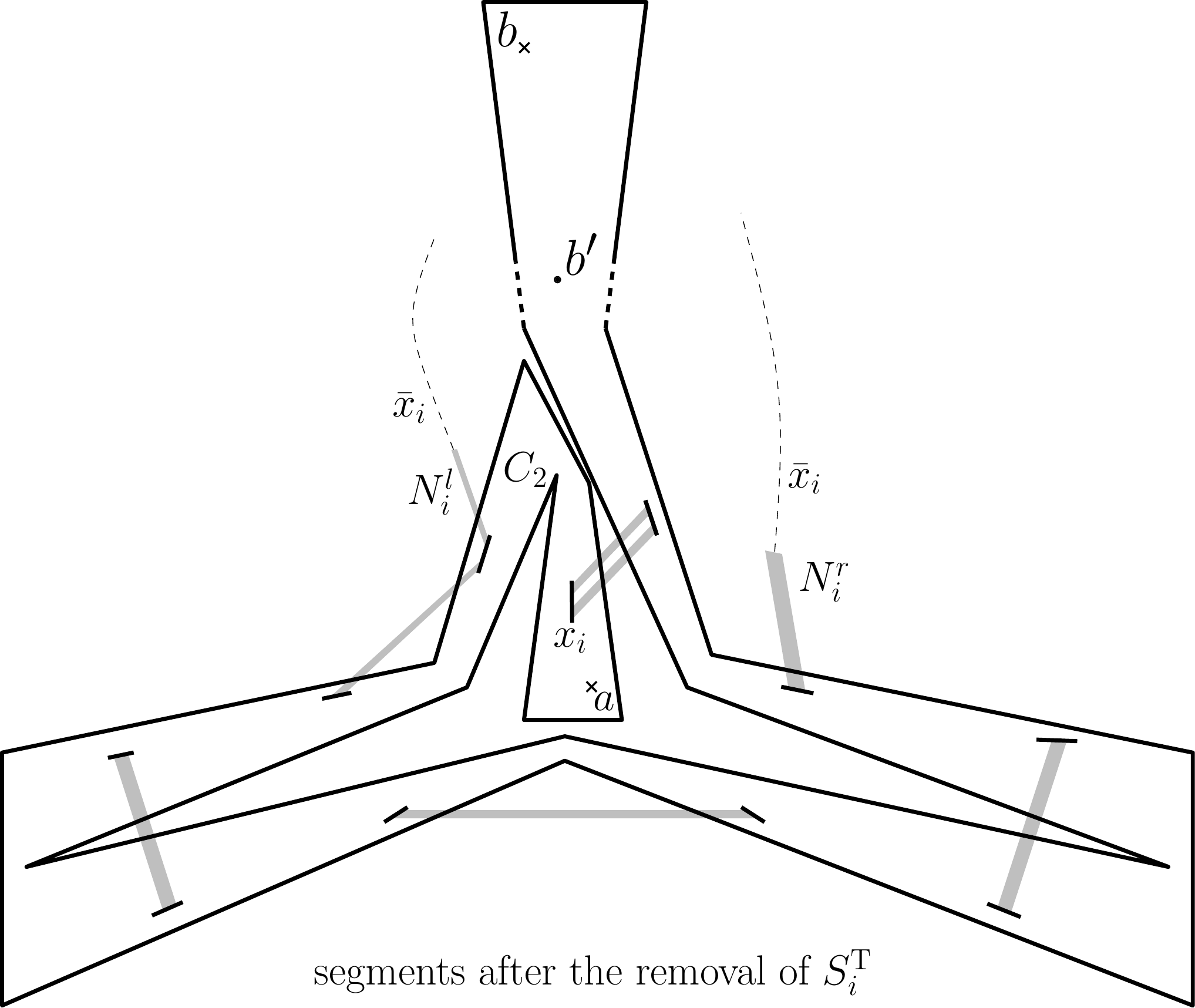}
\centering
\hfill
\includegraphics[width=.49\textwidth]{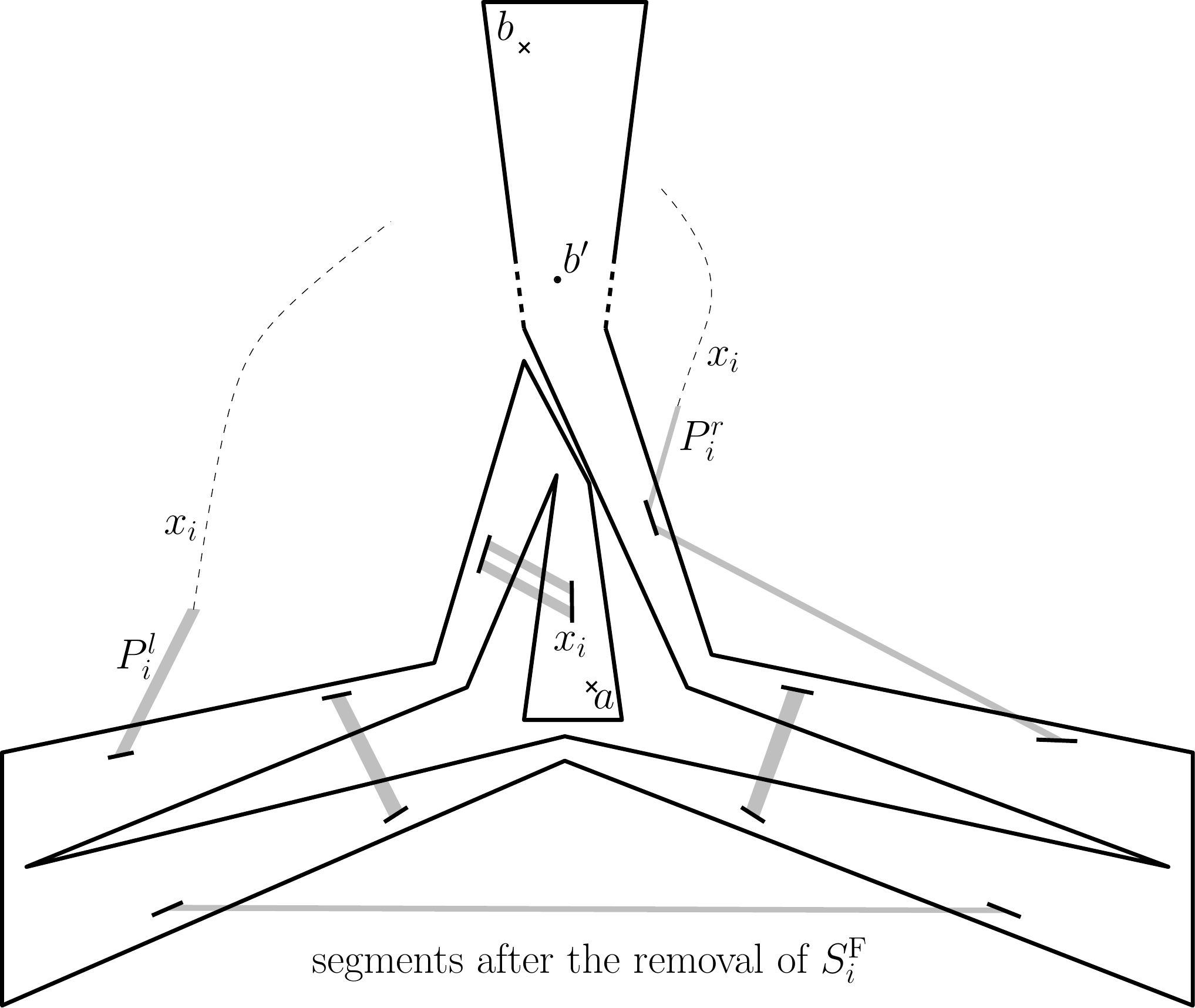}
\hfill
\caption{Removal of $S_i^{\rm T}$ (left) and $S_i^{\rm F}$ (right).}
\label{chain-posneg}
\end{figure}

Consider a truth assignment $\{x_i=b_i\}$, where each $b_i\in \{\true ,\false \}$, satisfying at least $(m-k)$ clauses. We construct a subset of segments $S'$ where we include the set $S_i^{b_i}$, for each variable $x_i$, and a segment of $T_j$, for each clause $C_j$ that is not satisfied by the truth assignment. Since $|S_i^{b_i}|=8\ell_i$, the set $S'$ contains at most $8\ell+ k$ segments. The removal of $S'$ leaves the points $a$ and $b'$ in the same cell of the arrangement. Equivalently, there is an $a$-$b'$ path inside the tunnel that crosses only segments from $S'$. If a clause $C_j$ of $\Phi$ is satisfied by the truth assignment, then at least one of the segments in $T_j$ is included in $S_i^{b_i}\subset S'$. If a clause $C_j$ is not satisfied, then one of the segments $T_j$ is included in $S'$ by construction. Thus, for each clause $C_j$ we have $T_j\cap S'\not= \emptyset$. It follows that $b'$ and $b$ are in the same cell after the removal of $S'$.

Conversely, note first that any $a$-$b$ path with cost at most $8\ell +k\le  16 m+m = 17m$ cannot intersect the tunnel boundary or an obstacle because segments of type $\tau_{\infty}$ have weight $20m$. Let $S'$ be the set of segments crossed by the path. If $P_i\subset S'$, then we define $b_i=\true$; otherwise, we define $b_i=\false$.
Note that when $P_i\not\subset S'$, then $N_i\subset S'$ because the $a$-$b$ path is inside the tunnel. (However it may be $N_i\cup P_i\subset S'$, so the assignment of $b_i$ is not symmetric.) We next argue that the truth assignment $\{x_i=b_i\}$ satisfies at least $(m-k)$ clauses.

Consider the case when $P_i\subset S'$. Inspection shows that 
\[
	|S'\cap S_i| \ge 8\ell_i + |S'\cap (N_i^l \cup  N_i^r)|.
\]
Indeed, after the removal of $P_i\cup N_i^l \cup  N_i^r$ any path from $a$ to $b'$ must still cross at least $6$ pieces. Similarly, inspection shows that when $N_i\subset S'$ we have 
\[
	|S'\cap S_i| \ge 8\ell_i + |S'\cap (P_i^l \cup  P_i^r)|.
\] 
Let $A_i= N_i^l \cup  N_i^r$ if $b_i=\true$ and $A_i= P_i^l \cup  P_i^r$ if $b_i=\false$. The previous cases can be summarized as 
\[
	|S'\cap S_i| \ge 8\ell_i + |S'\cap A_i|.
\]
We further define
\[
	Y= \bigcup_i (S'\cap A_i).
\]
For each clause $C_j$ we have $S'\cap T_j\not= \emptyset$ by construction, as otherwise $a$ and $b$ cannot be in the same cell of $S\setminus S'$. If $C_j$ is not satisfied by the truth assignment $\{x_i=b_i\}$, then it must be $(S'\cap T_j)\subset S'\cap A_k$ for some variable $x_k$ in $C_j$. This means that  
$T_j\cap Y\not=\emptyset$. Since the sets $T_j$ are disjoint by construction, the number of unsatisfied clauses is bounded by $|Y|$.
Using that 
\[
	8\ell + k~ = ~ |S'| ~=~ \sum_{i=1}^n |S' \cap S_i| ~\ge~ \sum_{i=1}^n (8\ell_i + |S'\cap A_i|) ~=~ 8\ell + \sum_{i=1}^n |S'\cap A_i|,
\]
we obtain 
\[
	\sum_{i=1}^n |S'\cap A_i| \le k.
\]
Therefore, the total number of clauses with value $\false$ is bounded by
\[
	|Y| ~=~ \sum_i |S'\cap A_i| ~\le~ k.\qedhere
\]
\end{proof}
The construction can be easily modified by replacing every heavy segment with a set of $20m$ distinct parallel single segments such that every single segment in $S$ that originally intersected the heavy segment now intersects all the segments in the new set and such that no three segments have a point in common. We have the following:

\begin{theorem}\label{NP-hard}
 {\sc $2$-Cells-Connection} is NP-hard and APX-hard even when no three segments intersect at a point.
\end{theorem}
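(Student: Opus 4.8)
The plan is to package Lemma~\ref{sec2_lem:bothdirections} into a value-preserving reduction from {\sc Exact-Max-$2$-Sat}, since all the geometric work is already done there. First I would read off the optimal value of the {\sc $2$-Cells-Connection} instance $(S(\Phi),a,b)$. Because $\ell=2m$, the lemma says an $a$-$b$ path of cost at most $8\ell+k=16m+k$ exists precisely when some assignment satisfies at least $m-k$ clauses. Letting $k$ vary and using that costs are integers, the minimum cost of an $a$-$b$ path equals $c^*=8\ell+(m-{\rm opt}(\Phi))=17m-{\rm opt}(\Phi)$. This is valid for ${\rm opt}(\Phi)\le m-1$, so that $k=m-{\rm opt}(\Phi)$ lies in the lemma's range $[1,m]$; the case ${\rm opt}(\Phi)=m$ is just the polynomially solvable $2$-SAT question and is irrelevant to hardness.

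For NP-hardness this already suffices. To decide whether $\Phi$ admits an assignment satisfying at least $k$ clauses (with $k\le m-1$, the only interesting thresholds), I would build $(S(\Phi),a,b)$ in polynomial time and ask whether the minimum $a$-$b$ cost is at most $17m-k$; by the formula for $c^*$ this holds iff ${\rm opt}(\Phi)\ge k$. As the construction is polynomial-size and polynomial-time, this is a valid polynomial-time reduction from the NP-complete decision version of {\sc Exact-Max-$2$-Sat}.

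For APX-hardness I would exhibit an L-reduction. Take $f$ to be the construction, and let $g$ map a solution $S'$ (equivalently an $a$-$b$ path) to the truth assignment extracted in the converse direction of Lemma~\ref{sec2_lem:bothdirections}, namely $x_i=\true$ iff $P_i\subseteq S'$; this is polynomial-time, and from a solution of cost $c=|S'|$ that direction yields an assignment satisfying at least $(8\ell+m)-c$ clauses. The first L-reduction inequality follows because a uniformly random assignment satisfies each $2$-clause with probability $3/4$, so ${\rm opt}(\Phi)\ge 3m/4$ and hence $c^*=17m-{\rm opt}(\Phi)\le 17m\le \tfrac{68}{3}{\rm opt}(\Phi)$, giving $\alpha=68/3$. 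For the second, writing $c_A=c_A(g(S'))$ for the number of clauses the extracted assignment satisfies, the bound $c_A\ge(8\ell+m)-c$ together with $c^*=(8\ell+m)-{\rm opt}(\Phi)$ gives ${\rm opt}(\Phi)-c_A\le c-c^*$, so $|{\rm opt}(\Phi)-c_A|\le|c-c^*|$ and $\beta=1$. Since {\sc Exact-Max-$2$-Sat} is APX-complete, this establishes APX-hardness of {\sc $2$-Cells-Connection}.

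Finally I would justify the general-position clause. The construction is already phrased in terms of single segments (a weight-$w$ segment being $w$ stacked copies), so only the coincidences among stacked and among nearly-parallel segments remain to be removed. Replacing each heavy segment of weight $20m$ by the $20m$ distinct, nearly-parallel single segments described before the theorem preserves the intersection pattern (each crossing segment still meets all $20m$ copies) and can be perturbed so that no three segments pass through a common point. The only thing to verify is that this does not disturb the accounting of Lemma~\ref{sec2_lem:bothdirections}: any path piercing a former heavy barrier now crosses all $20m$ copies and so has cost at least $20m>17m$, exceeding every relevant budget, hence such paths are never optimal and never decide the question; paths within budget cross only single light segments, for which the count $8\ell+k$ is unchanged. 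I expect this last verification to be the main obstacle — confirming that the perturbation into general position neither opens a cheaper route through the barriers nor alters the per-variable and per-clause segment counts underlying the lemma's bounds; everything else is bookkeeping layered on top of Lemma~\ref{sec2_lem:bothdirections}.
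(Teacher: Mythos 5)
Your proposal is correct and takes essentially the same approach as the paper: NP-hardness follows directly from Lemma~\ref{sec2_lem:bothdirections} plus the polynomial size of the construction, and APX-hardness from the same approximation-preservation argument via ${\rm opt}(\Phi)\ge 3m/4$ -- indeed your $\alpha=68/3$ is exactly the constant appearing in the paper's chain of inequalities. The only difference is presentational: you package the computation as an L-reduction with $(\alpha,\beta)=(68/3,1)$ and invoke APX-completeness of {\sc Exact-Max-$2$-Sat}, whereas the paper unfolds the same inequalities to extract an explicit inapproximability threshold $c > 68/65 - 63/(22\cdot 65)\approx 1.0021$ from H{\aa}stad's $21/22$ bound; your treatment of the heavy-segment replacement likewise matches the paper's remark preceding the theorem.
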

\begin{proof}
NP-hardness follows form Lemma~\ref{sec2_lem:bothdirections} and the fact that the reduction produces $\mathcal{O}(nm)$ segments, whose coordinates can be bounded by a polynomial in $(n+m)$. APX-hardness follows from the fact that the reduction is approximation-preserving, as we now show.

	First, since there is always an assignment that satisfies at least $3m/4$ clauses, we have that $m\le (4/3) {\rm opt}(\Phi)$. Recall that an optimal $a$-$b$ path costs $8\ell+ (m-{\rm opt}(\Phi))$, where $\ell=2m$. A polynomial-time $c$-approximation algorithm ($c>1$) for the problem would give a path that costs at most

	\begin{align*}
	c(8\ell+ (m-{\rm opt}(\Phi))) ~~&=~~  c(17m+{\rm opt}(\Phi))\\
									 &=~~  17m-c\,{\rm opt}(\Phi)+17(c-1)m\\
													&\le~~  17m-c\,{\rm opt}(\Phi) + 17(c-1)(4/3){\rm opt}(\Phi)\\
													&=~~  17m-{\rm opt}(\Phi)(68/3-(65/3)c)\\
													&=~~   16m+\Bigl[ m-{\rm opt}(\Phi)(68/3-65c/3) \Bigr]
	\end{align*}
	and, by Lemma~\ref{sec2_lem:bothdirections}, a truth assignment that satisfies at least 
	${\rm opt}(\Phi)(68/3-65c/3)$ clauses. However, {\sc Exact-Max-2-SAT} cannot be approximated 
	above $21/22$~\cite{Hastad01}, which implies that $c$ must be larger than $(68/65 - 63/(22\cdot 65))\approx 1.002097\dots$
	(A slightly better inapproximability result can be obtained using the better bounds that rely on the unique games conjecture~\cite{kkmo-oir-07}.)
\end{proof}

We can reduce {\sc $2$-Cells-Connection} to the minimum color path problem (MCP): 
Given a graph G with colored (or labeled) edges and two of its vertices, find a path between the vertices that uses the minimum possible number of colors.
We color the edges of the dual graph $G$ of $\mathcal{A}(S)$ as follows: two edges of $G$ get the same color if and only if their corresponding edges in $\mathcal{A}(S)$ lie on the same segment of $S$.
Then, finding an $a$-$b$ path of cost $k$ in $\mathcal{A}(S)$ amounts to finding a $k$-color path in $G$ between the two cells which $a$, $b$ lie in.

However, MCP is NP-hard~\cite{BLWZ05} and W[1]-hard~\cite{FGK10} (with respect to solution size) even for planar graphs, it has a polynomial-time $\mathcal{O}(\sqrt{n})$-approximation algorithm and is non-approximable within any polylogarithmic factor~\cite{HMS07}. 

\section{Tractable cases for connecting two cells}

We now describe two special cases where {\sc $2$-Cells-Connection} is tractable. First, we consider the case where the input segments
have few crossings, in a sense that is specified below. Then, we return to the special case where we have a polygon and provide an algorithm
that takes polynomial time when the number of holes in the polygon is constant.
 
\subsection{Segments crossings.}
Without loss of generality, we assume that every segment in $S$ intersects at least two other segments and that both endpoints of a segment are intersection points. We say that two segments cross if and only if they intersect at a point that is interior to both segments (a segment crossing).

Consider the colored dual graph $G$ of $\mathcal{A}(S)$ as defined after Theorem~\ref{NP-hard}. A face of $G$ (except the outer one) corresponds to a point of intersection of some $r\geq 2$ segments and has $r$ colors and, depending on the type of intersection, from $r$ to $2r$ edges. 
For example, for $r=2$ we can get two multiple edges, a triangle, or a quadrilateral, with two distinct colors. See Fig.~\ref{fig:colored_dual}(a)-(c), where the colors are given as labels.

\begin{figure}[t]
\centering
\includegraphics[width=\textwidth]{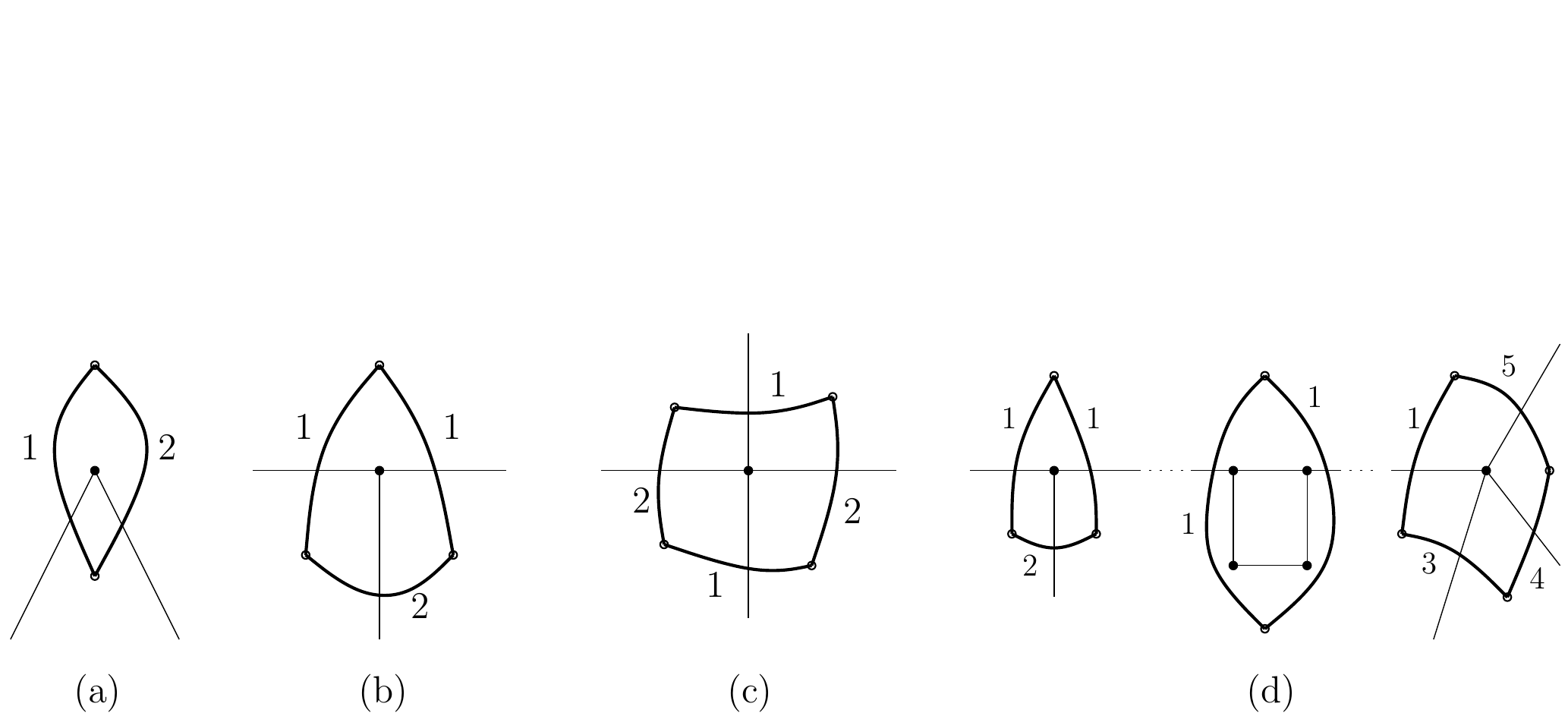}
\caption{Examples of intersections in $\mathcal{A}(S)$ and colored edges in $G$.}
\label{fig:colored_dual}
\end{figure}

When any three segments may intersect only at a common endpoint and no two segments cross, $G$ can only have multiple edges (possible all with the same color), bi-chromatic triangles, and arbitrary large faces where all edges have different colors; See Fig.~\ref{fig:colored_dual}(d) for an example. In this case, since two segments can intersect only at one point, each color induces a connected subgraph of $G$, in fact a tree (where all but one multiple edges with the same color can be deleted) for there can be no monochromatic cycle in $G$.  Then, {\sc $2$-Cells-Connection} reduces to a simple shortest path computation between the cells containing $a$ and $b$ in the (uncolored) graph resulting from $G$ by completing each monochromatic tree into a clique.
By contrast, note that {All-Cells-Connection} is still NP-hard for this special case; see Section~\ref{allcells}.

Generalizing this, if we allow $k$ segment crossings, 
we can easily reduce the problem to $2^{\mathcal{O}(k)}$ shortest path problems as follows. Let $C \subseteq S$ be the set of the (at most $2k$) segments participating in these crossings. For a fixed subset $C'$ of $C$, we first contract every edge of $G$ corresponding to a segment in $C'$, effectively putting all segments of $C'$ into the solution. Then, we delete every edge corresponding to a segment in $C\setminus C'$  that still participates in a crossing, i.e., we exclude all crossing segments of $C\setminus C'$ from the solution. 
In the resulting (possibly disconnected) graph $G'$, each of the remaining colors induces again a monochromatic subtree, thus we can compute a shortest path as before and add $C'$ to the solution. Finally, we return a minimum size solution set over all $2^{\mathcal{O}(k)}$ possible subsets $C'$. 
Thus, we have just proved the following:

\begin{theorem}
{\sc $2$-Cells-Connection} is fixed-parameter tractable with respect to the number of segment crossings if any three segments may intersect only at a common endpoint.
\end{theorem}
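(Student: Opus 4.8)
The plan is to establish fixed-parameter tractability in the number of segment crossings $k$ by reducing {\sc $2$-Cells-Connection} to $2^{\mathcal{O}(k)}$ instances of the polynomially solvable case already treated above (where any three segments meet only at a common endpoint and no two segments cross). The whole argument will be carried out on the colored dual graph $G$ of $\mathcal{A}(S)$, where each color corresponds to a segment and an $a$-$b$ path of cost $j$ corresponds to a $j$-color path between the cells containing $a$ and $b$.

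First, I would identify the set $C\subseteq S$ of segments that participate in at least one crossing. Since there are $k$ crossings and each crossing involves exactly two crossing segments, we have $|C|\le 2k$. The key observation is that once we decide, for each segment in $C$, whether it belongs to the solution or not, the residual instance contains no crossings and hence falls into the tractable case. So I would iterate over all $2^{|C|}=2^{\mathcal{O}(k)}$ subsets $C'\subseteq C$, treating $C'$ as exactly the crossing segments we choose to put in the solution. For each such guess I would (a) contract in $G$ every edge colored by a segment of $C'$, which forces those segments into the solution at a fixed cost $|C'|$, and (b) delete from $G$ every edge colored by a segment of $C\setminus C'$ that still lies on a crossing, which forbids those segments from the solution. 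After these operations the resulting graph $G'$ has no remaining crossings, so every surviving color induces a monochromatic subtree exactly as in the already-proven tractable case; I would therefore complete each monochromatic tree into a clique and compute a shortest (uncolored) path between the two relevant cells, adding $|C'|$ to account for the contracted colors. Returning the minimum over all $C'$ gives the optimum, and the running time is $2^{\mathcal{O}(k)}$ times the polynomial cost of a single shortest-path computation.

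The step requiring the most care is the correctness of the guess-and-reduce decomposition: I must verify that, for the subset $C'$ equal to the crossing segments used by a genuinely optimal solution, the reduced graph $G'$ really does fall under the non-crossing tractable case, so that each surviving color induces a tree and the clique-completion plus shortest-path computation recovers the correct cost. Concretely, I would argue that contracting the $C'$-colored edges and deleting the remaining crossing edges eliminates every crossing face of $G$, leaving only multiple edges, bi-chromatic triangles, and faces whose edges all carry distinct colors; then the earlier reasoning that each color class is acyclic (no monochromatic cycle, since two non-crossing segments meet in at most one point) applies verbatim, so the clique-completion trick is valid. The minor subtleties are bookkeeping: ensuring that contraction correctly charges $|C'|$ to the objective, that deletion never disconnects the graph in a way that spuriously blocks a feasible path, and that the minimum over all $2^{\mathcal{O}(k)}$ choices coincides with the global optimum because every optimal solution selects some subset of $C$ as its crossing segments. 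With these verified, the theorem follows immediately.
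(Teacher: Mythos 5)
Your proposal matches the paper's proof essentially verbatim: the same set $C$ of at most $2k$ crossing segments, the same enumeration of all $2^{\mathcal{O}(k)}$ subsets $C'$ with contraction of $C'$-colored edges and deletion of the still-crossing edges of $C\setminus C'$, and the same reduction of each residual instance to a shortest-path computation after clique-completing the monochromatic trees. Your added care about the correctness of the guess-and-reduce step (that the residual graph falls under the non-crossing case and that the optimum is attained by some guess) is exactly the justification the paper leaves implicit, so nothing further is needed.
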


\subsection{Polygon with holes.}
Let $P$ be a polygon with $h$ holes and $S$ be a set of segments lying inside $P$ with their endpoints on its boundary; see Fig.~\ref{fig:polygon}. We use $n$ as a bound for the number of vertices of $P$ and segments in $S$. 
We consider the restricted {\sc $2$-Cells-Connection} problem where the $a$-$b$ path may not cross the boundary of $P$. This version is also NP-hard by a simple reduction from the general one: place a large polygon enclosing all the segments and add a hole at the endpoint of each segment. 
We assume for simplicity that $a$ and $b$ are in the interior of $P$. 

A boundary component of $P$ may be the exterior boundary or the boundary of a hole. For each boundary component $\beta$ of $P$, let $C_\beta$ be the connected component of $\mathbb{R}^2\setminus P$ that has $\beta$ as boundary, and let $z_\beta$ be an arbitrary, fixed point in the interior of $C_\beta$. If $\beta$ is the exterior boundary, then $C_\beta$ is unbounded.

Let $\beta$ and $\beta'$ be two boundary components of $P$; it may be that $\beta=\beta'$. 
Let $S_{\beta,\beta'}$ be the subset of segments from $S$ with one endpoint in $\beta$ and another endpoint in $\beta'$. We partition $S_{\beta,\beta'}$ into clusters, as follows. Consider the set $X_{\beta,\beta'}$ obtained from $P\setminus \{ a, b \}$ by adding $C_\beta$ and $C_{\beta'}$. Note that $a$ and $b$ are holes in $X_{\beta,\beta'}$. For each segment $s=\segment{pq}\in S_{\beta,\beta'}$, with $p\in \beta$ and $q\in\beta'$, we define the following
curve $\gamma_s$: follow a shortest path in $C_\beta$ from $z_\beta$ to $p$, then follow $\segment{pq}$, and then follow a shortest path
in $C_{\beta'}$ from $q$ to $z_{\beta'}$. See Fig.~\ref{fig:polygon2}.
We say that segments $s$ and $s'$ from $S_{\beta,\beta'}$ are \emph{$a$-$b$ equivalent} if $\gamma_s$ and $\gamma_{s'}$ are homotopic paths in $X_{\beta,\beta'}$.
Since being homotopic is an equivalence relation (reflexive, symmetric, transitive), being $a$-$b$ equivalent is also an equivalence relation in $S_{\beta,\beta'}$.
Therefore, we can make equivalence classes, which we call \emph{clusters}. 
The following two results provide key properties of the clusters. 

\begin{figure}[t]
\centering
\includegraphics[width=0.9\textwidth]{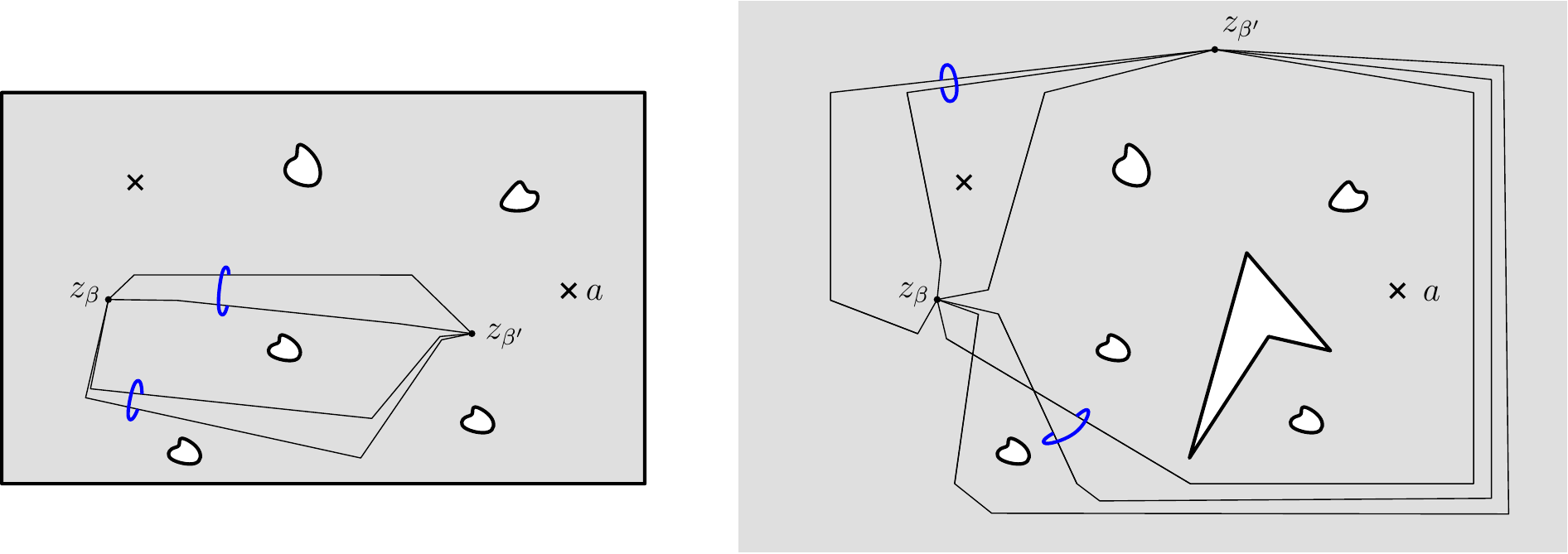}
\caption{Some of the curves $\gamma_s$ arising from Fig.~\ref{fig:polygon}, after a small perturbation, and the resulting clusters. In the left case, $\beta$ and $\beta'$ are boundaries of holes, while in the right case $\beta'$ is the exterior boundary.}
\label{fig:polygon2}
\end{figure}

\begin{lemma}\label{le:clusters1}
$S_{\beta,\beta'}$ is partitioned into ${\mathcal O}(h^2)$ clusters. Such partition can be computed in ${\mathcal O}(h n \log n)$ time.
\end{lemma}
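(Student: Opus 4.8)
The statement claims two things about the partition of $S_{\beta,\beta'}$ into $a$-$b$ equivalence classes (clusters): a combinatorial bound of $\mathcal{O}(h^2)$ on the number of clusters, and an $\mathcal{O}(hn\log n)$ time bound for computing them. My approach is to separate these concerns entirely. For the counting bound, I would show that the homotopy type of $\gamma_s$ in $X_{\beta,\beta'}$ is captured by a short combinatorial signature, and count the possible signatures. For the algorithm, I would show how to extract each segment's signature (or a canonical representative thereof) efficiently and then group segments by it.

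**Counting the clusters.** The space $X_{\beta,\beta'}$ is the polygon $P$ with the two complementary regions $C_\beta$ and $C_{\beta'}$ glued back in, but with $a$ and $b$ punched out as holes. So $X_{\beta,\beta'}$ is (topologically) a disk with some number of holes, where the holes correspond to the $h+1$ boundary components of $P$ \emph{other} than $\beta$ and $\beta'$, plus the two punctures $a$ and $b$. Crucially, each curve $\gamma_s$ runs from the fixed basepoint $z_\beta$ to the fixed basepoint $z_{\beta'}$; since the endpoints are pinned, two such curves are homotopic iff they wind the same way around the punctures/holes. The first key step is to argue that the homotopy class of $\gamma_s$ is determined by a \emph{bounded-length} word: because $\gamma_s$ is built from two shortest paths (in $C_\beta$ and $C_{\beta'}$) concatenated with a single straight segment $\segment{pq}$, and shortest paths are simple and cannot wind arbitrarily, the relevant homotopy data is a sequence of $\mathcal{O}(1)$ "turns" around at most $O(h)$ obstacles. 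The cleanest way to formalize this is to pick a system of $\mathcal{O}(h)$ disjoint cuts (e.g. arcs from each hole/puncture to the outer boundary) and record the crossing pattern of $\gamma_s$ with these cuts; since $\gamma_s$ is a path, and the only "free" part is the straight segment $\segment{pq}$ (the two shortest-path pieces have fixed, canonical crossing behavior once $p$ and $q$ are on $\beta,\beta'$), the crossing pattern is short. I expect that each of $a$ and $b$ can be separated from $\gamma_s$ in only $\mathcal{O}(1)$ topologically distinct ways, and the interaction among the $\mathcal{O}(h)$ features yields $\mathcal{O}(h^2)$ classes, matching the bound.

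**The algorithm.** To compute the partition in $\mathcal{O}(hn\log n)$ time, I would first preprocess $P$: compute shortest path structures (shortest path maps or a funnel structure) in $C_\beta$ and in $C_{\beta'}$ so that the canonical curve $\gamma_s$ for each of the $\mathcal{O}(n)$ segments can be described quickly. For each segment $s\in S_{\beta,\beta'}$ I would compute its homotopy signature — the crossing sequence with the chosen cut system — in $\mathcal{O}(h\log n)$ time (an $\mathcal{O}(\log n)$ factor for point-location / ray-shooting against $\mathcal{O}(h)$ cuts, or against the $\mathcal{O}(h)$ holes). Summing over the $\mathcal{O}(n)$ segments gives $\mathcal{O}(hn\log n)$. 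Finally, I would sort or hash segments by their signature to form the clusters; since each signature is a word of length $\mathcal{O}(h)$ over a bounded alphabet, comparing two signatures costs $\mathcal{O}(h)$, and sorting $n$ of them costs $\mathcal{O}(hn\log n)$, staying within budget.

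**Main obstacle.** The hard part will be the counting argument, specifically proving that only $\mathcal{O}(h^2)$ homotopy classes actually arise — not merely bounding the number of \emph{possible} signatures, but showing the realizable ones are few. The quadratic (rather than exponential in $h$) bound is the crux: it must come from the fact that $\gamma_s$ is a \emph{simple} curve consisting of two shortest paths and one straight chord, so it cannot thread the holes in complicated interleavings. I expect the right lemma to say that the homotopy class is determined by (i) which side of each of $a$ and $b$ the chord passes — contributing the interaction that can depend on a pair of features — together with (ii) data that is forced once $p,q$ and the basepoints are fixed. Making this rigorous, and ruling out the naively-possible exponential blowup, is where the real work lies; the algorithmic time bound then follows routinely from standard shortest-path and point-location machinery once the signature is pinned down.
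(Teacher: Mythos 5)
Your algorithmic half is essentially the paper's: build a cut system $\Sigma$ of $\mathcal{O}(h)$ shortest paths rooted at $a$ (one to each other boundary component, one to $b$, one to infinity if needed), compute each segment's crossing sequence by ray shooting in $\mathcal{O}(h\log n)$ per segment, and bucket by signature. Two small refinements in the paper are worth noting: the cuts are taken inside $P$, so $\gamma_s$ can cross them only along the straight chord $s$ itself (the pieces in $C_\beta$ and $C_{\beta'}$ never meet $\Sigma$, so $\gamma_s$ is never explicitly constructed), and each $\gamma_s$ crosses each cut at most once, so the crossing sequence comes out already reduced --- no reduction step is needed.

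The genuine gap is exactly where you flagged it: the $\mathcal{O}(h^2)$ count of realizable classes, and your proposed route for it will not work as stated. Arguing curve-by-curve that ``each of $a$ and $b$ can be separated from $\gamma_s$ in $\mathcal{O}(1)$ ways'' ignores that the chord can thread among all $h$ holes, not just the two punctures, so per-curve simplicity alone does not rule out many distinct length-$\mathcal{O}(h)$ reduced words. The missing structural fact is a property of the \emph{whole family}: $\Sigma\cup\Gamma_{\beta,\beta'}$ is a family of pseudosegments, i.e.\ any two of its members cross at most once. This holds because inside $P$ every curve involved is a shortest path or a straight chord (two shortest paths cross at most once), the cuts in $\Sigma$ are pairwise non-crossing, and two curves $\gamma_s,\gamma_{s'}$ can cross only at $s\cap s'$ after perturbation (their shortest-path tails in $C_\beta$, $C_{\beta'}$ can be made disjoint). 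Given this, the paper invokes a theorem of Mount (his Theorem~1.1, originally proved for counting shortest-path crossing sequences on polyhedra): a family of curves that pairwise cross at most once and cross a system of $k$ cuts at most once each realizes only $\mathcal{O}(k^2)$ distinct crossing sequences, giving $\mathcal{O}(|\Sigma|^2)=\mathcal{O}(h^2)$ clusters. Without this (or an equivalent global counting lemma), your plan stalls at the step you yourself identified as ``where the real work lies''; with it, the rest of your write-up goes through.
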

\begin{proof}
Let $\Gamma_{\beta,\beta'}$ be the set of curves $\gamma_s$ over all segments $s\in S_{\beta,\beta'}$.
Note that two curves $\gamma_s$ and $\gamma_{s'}$ of $\Gamma_{\beta,\beta'}$ may cross only once, and they do so along $s$ and $s'$.
With a small perturbation of the curves in $\Gamma_{\beta,\beta'}$ we may assume that $\gamma_s$ and $\gamma_{s'}$ are either disjoint or cross at $s\cap s'$.
(We do not actually use that $\gamma_s$ contains shortest paths inside $C_\beta$ and $C_{\beta'}$ besides for this property of non-crossing curves inside $C_\beta$ and $C_{\beta'}$.)

We now describe a simple criteria using crossing sequences to decide when two segments of $S_{\beta,\beta'}$ are $a$-$b$ equivalent. We take a set $\Sigma$ of non-crossing paths in $X_{\beta,\beta'}$ that have the following property: cutting $X_{\beta,\beta'}$ along the curves of $\sigma$ removes all holes. Such set $\Sigma$ has a tree-like structure and can be constructed as follows. For each boundary $\alpha$ of $P$, distinct from $\beta$ and $\beta'$, we add to $\Sigma$ the shortest path in $P$ between $a$ and $\alpha$. We add to $\Sigma$ the shortest path in $P$ between $a$ and $b$. Finally, if $\beta$ or $\beta'$ is the exterior boundary of $P$, we add to $\Sigma$ a shortest path from $a$ to a point that is very far in $P$ union the the outer face. In total, $\Sigma$ has $O(h)$ polygonal paths in $X_{\beta,\beta'}$. Note that the curves in $\Sigma$ are non-crossing and a small perturbation makes them disjoint, except at the common endpoint $a$. See Fig.~\ref{fig:polygon3}. Each curve $\sigma\in \Sigma$ is simple and has two sides. We arbitrarily choose one of them as the right side and the other as the left side. We use $\sigma_1,\dots,\sigma_k$ to denote the curves of $\Sigma$.

\begin{figure}[t]
\centering
\includegraphics[width=0.9\textwidth]{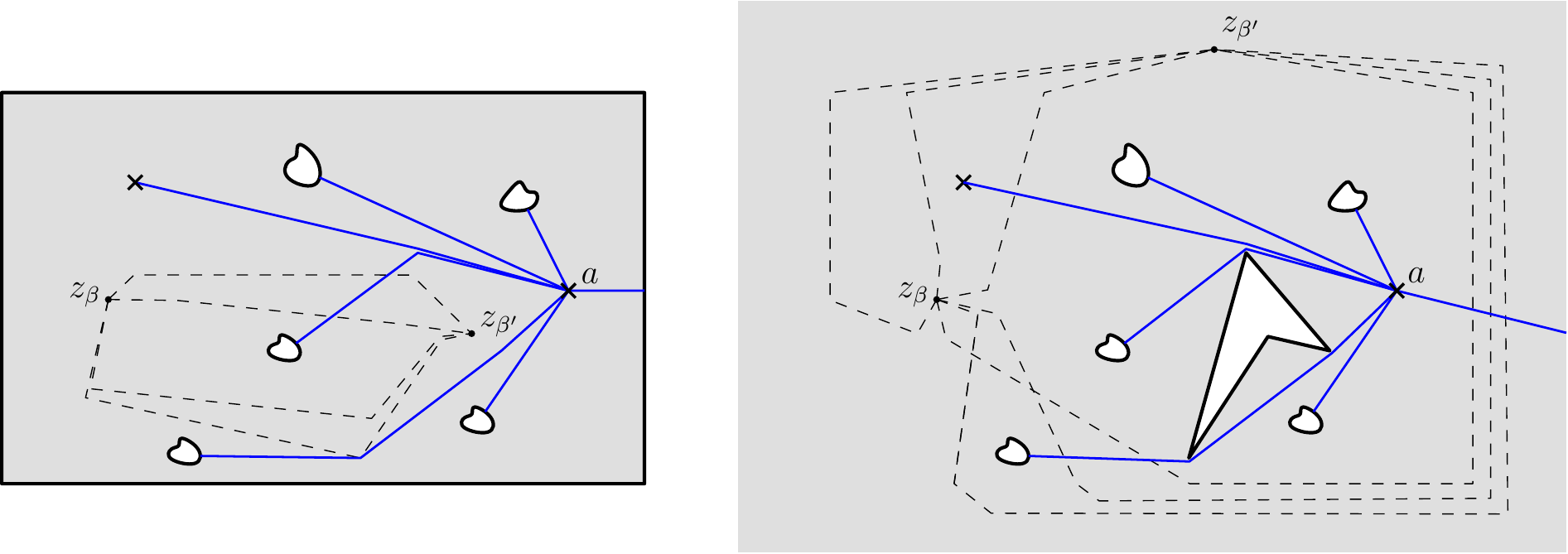}
\caption{The curves $\Sigma$ in solid and $\Gamma_{\beta,\beta'}$ in dashed style for the example of Fig.~\ref{fig:polygon2}, after a small perturbation.}
\label{fig:polygon3}
\end{figure}

To each path $\gamma$ in $\Gamma_{\beta,\beta'}$ we associate a crossing sequence $w(\gamma)$ as follows. We start with the empty word and walk along $\gamma$. When $\gamma$ crosses an arc $\sigma_i\in \Sigma$ from left-to-right we append $\sigma_i^\rightarrow$ to the word, and when $\gamma$ crosses $\sigma_i$ from right-to-left we append  $\sigma_i^\leftarrow$ to the word. From the crossing sequence $w(\gamma)$ we can obtain the \emph{reduced crossing sequence}  $w^R(\gamma)$: we iteratively remove contiguous appearances of $\sigma_i^\rightarrow$ and $\sigma_i^\leftarrow$, for any $i$. For example, from the crossing sequence $\sigma_1^\rightarrow \sigma_2^\leftarrow \sigma_3^\rightarrow \sigma_3^\leftarrow \sigma_2^\rightarrow$ we obtain the reduced crossing sequence $\sigma_1^\rightarrow$. A consequence of using $\{ \sigma_i \}$ to construct the so-called universal cover is the following characterization:
the curves $\gamma_s$ and $\gamma_{s'}$ are homotopic in $X_{\beta,\beta'}$ if and only if the curves $\gamma_s$ and $\gamma_{s'}$ have the same reduced crossing sequence. See for example~\cite{clms-04}. We conclude that $s$ and $s'$ from $S_{\beta,\beta'}$ are $a$-$b$ equivalent if and only if 
$w^R(\gamma_s)= w^R(\gamma_{s'})$.

The union of $\Sigma$ and $\Gamma_{\beta,\beta'}$ forms a family of pseudosegments: any two of them crosses at most once.
Indeed, by construction different curves can only cross in $P$, but inside $P$ all those curves are shortest paths, and thus can cross at most once.
Furthermore, the segments $\Sigma$ do not cross by construction and the curves of $\Gamma_{\beta,\beta'}$ have common endpoints.
Mount~\cite[Theorem 1.1]{mount-90} has shown that in such case the curves in $\Gamma_{\beta,\beta'}$ 
define at most ${\mathcal O}(|\Sigma|^2)={\mathcal O}(h^2)$ distinct crossing sequences. Therefore, there are at most ${\mathcal O}(h^2)$ homotopy classes defined
by the curves in $\Gamma_{\beta,\beta'}$, and $S_{\beta,\beta'}$ defines ${\mathcal O}(h^2)$ clusters. 

The procedure we have described is constructive: we have to compute $O(h)$ shortest paths in $P$ to obtain the curves of $\Sigma$, and then, for each segment $s\in S_{\beta,\beta'}$, we have to compute the corresponding crossing sequence. Such crossing sequence is already reduced. Note that for computing the crossing sequence of $\gamma_s$ we never have to construct $\gamma_s$ itself because all crossings occur along $s$.
This can be done in ${\mathcal O}(h n\log n)$ time using algorithms for shortest paths in polygonal domains~\cite{hs-oaesp-99} and data structures for ray-shooting among the segments of $\Sigma$~\cite{ray-shooting}.
\end{proof}

\begin{lemma}\label{le:clusters2}
For each cluster, either all or none of the 
segments in the cluster are crossed by a minimum-cost $a$-$b$ path. 
\end{lemma}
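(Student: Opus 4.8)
The plan is to fix a minimum-cost $a$-$b$ path and show that the set of segments it crosses is a union of clusters; since we may always take the removed set to be exactly the segments crossed by an optimal path, this yields the lemma. Throughout, for a path or a segment $\delta$ I write $\pi\cdot\delta$ for the number of transverse crossings of $\pi$ with $\delta$, counted modulo $2$.

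First I would record the parity invariance that is built into the cluster definition. Let $s,s'$ lie in the same cluster of $S_{\beta,\beta'}$, so that $\gamma_s$ and $\gamma_{s'}$ are homotopic in $X_{\beta,\beta'}$ relative to their common endpoints $z_\beta,z_{\beta'}$. Then the loop $\gamma_s\cdot\reverse(\gamma_{s'})$ is null-homotopic in $X_{\beta,\beta'}$, and since $a$ and $b$ are holes of $X_{\beta,\beta'}$ it has winding number $0$ about each of them. For any $a$-$b$ path $\pi$ inside $P$, the mod-$2$ intersection number of $\pi$ with a closed curve equals the sum of that curve's windings about $a$ and about $b$, so $\pi\cdot\gamma_s\equiv\pi\cdot\gamma_{s'}\pmod 2$. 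Crucially, $\pi$ stays inside $P$ whereas $\gamma_s\setminus s$ lies in $C_\beta\cup C_{\beta'}$, which is disjoint from $P$; hence every crossing of $\pi$ with $\gamma_s$ occurs on the segment $s$, giving $\pi\cdot\gamma_s\equiv\pi\cdot s\pmod 2$. Combining, $\pi\cdot s\equiv\pi\cdot s'\pmod 2$: every $a$-$b$ path crosses all segments of one cluster with the same parity. The same computation, now with $\gamma_s$ a loop based at $z_\beta$, covers the case $\beta=\beta'$. Moreover this parity depends only on the homotopy class of $\pi$ in $(P,\{a,b\})$, since a homotopy of $\pi$ inside $P$ keeps it away from the fixed endpoints $z_\beta,z_{\beta'}$ of the $\gamma$-curves.

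Next I would upgrade this parity statement to the exact all-or-none claim. Let $\pi$ be a minimum-cost path and let $O=\{\,t\in S:\pi\cdot t\equiv 1\pmod 2\,\}$ be the set of segments it crosses an odd number of times. By the previous paragraph $O$ is a union of clusters. If $t\in O$ then $\pi$ crosses $t$ at least once, so $O$ is contained in the set $X$ of segments actually crossed by $\pi$, whence $|O|\le|X|=\mathrm{cost}(\pi)$. It therefore suffices to produce a path of cost at most $|O|$ that crosses \emph{exactly} the segments of $O$: such a path is again minimum-cost, and its crossed set $O$ is cluster-closed, which proves the lemma. To obtain it I would replace $\pi$ by a \emph{taut} representative of its homotopy class in $(P,\{a,b\})$, namely one minimizing the total number of crossings with $\bigcup S$ over all paths homotopic to $\pi$. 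For such a representative I claim that no segment is crossed an even positive number of times, so its crossed set coincides with the parity set $O$ and its cost drops to $|O|$.

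The heart of the argument, and the step I expect to be the main obstacle, is justifying this tautness: that a total-crossing-minimal representative crosses each segment exactly $\pi\cdot t$ times, i.e.\ it meets the parity lower bound \emph{simultaneously} for every segment. For a single segment this is immediate, but a naive local uncrossing is not cost-free: if $\pi$ makes an excursion to one side of a segment $t$ and one reroutes it to hug $t$, the reroute picks up a crossing with every segment that meets $t$ along the excursion, and some of these may be segments $\pi$ avoids elsewhere, so both the total count and the distinct count can rise. The clean way around this is a standard \emph{innermost-bigon} removal: a total-crossing-minimal representative admits no empty bigon with any segment, and the absence of bigons forces the geometric crossing number with each segment to equal its algebraic (parity) value. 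Making this precise uses that the segments of $S$ are simple arcs meeting pairwise at most once, so that an innermost bigon bounded by $\pi$ and some $t$ contains no other segment and can be swept across $t$ without creating new crossings; iterating removes every even-crossed segment. Once tautness is in hand, the crossed set of the taut representative equals $O$, which is a union of clusters, and the lemma follows.
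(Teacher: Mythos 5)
Your opening parity computation is sound, and it does capture one half of the lemma: since $\gamma_s\setminus s$ lies in $C_\beta\cup C_{\beta'}$, disjoint from $P$, every $a$-$b$ path in $P$ crosses two cluster-mates $s,s'$ with the same parity. But the reduction of the lemma to realizing the odd-crossed set $O$ breaks down, and it breaks exactly at the step you yourself flag as the main obstacle. First, innermost-bigon removal only produces a representative realizing, \emph{within the fixed homotopy class} of $\pi$, the minimal geometric number of crossings with each segment; in a polygon with holes this class-minimum can strictly exceed the parity. A class that winds around a hole must cross a segment attached to that hole twice with algebraic value $0$, and the region one would like to sweep across contains the hole, so no empty bigon exists and the double crossing is essential to the class. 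Hence ``taut $\Rightarrow$ no even positive crossings'' is false. Second, and more fatally, the statement you reduce everything to --- that some path crosses exactly the segments of $O$, forcing $|O|=\mathrm{cost}(\pi)$ --- is false in general, because the cost counts \emph{distinct} segments: re-crossing an already-crossed segment is free, so an optimal path may use a single segment as a cheap ``door'', entering and leaving through it. Schematically: place two holes with a wall of five short segments blocking the corridor between them, seal the routes over the holes with five segments each, and let one long segment $t$ run from outer boundary to outer boundary underneath everything. The cheapest $a$-$b$ path dips across $t$, runs under the wall, and crosses $t$ back: cost $1$, both crossings on $t$, so $O=\emptyset$ while no path of cost $0$ exists. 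Thus $|O|<\mathrm{cost}(\pi)$, no minimum-cost path has crossed set $O$, and your taut representative's crossed set is merely some superset of $O$ whose cluster-closedness is precisely what remains unproven. Note also that the lemma, as the paper proves it, holds for \emph{every} minimum-cost path, not just some.

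The paper's proof avoids global crossing minimization entirely and instead makes a cost-aware local exchange tailored to one cluster at a time. Assuming a minimum-cost $\pi$ crosses $s$ but not its cluster-mate $s'$ (and, among such, minimizing the number of crossings with $s$), the contractible loop obtained from $\gamma_s$ and $\reverse(\gamma_{s'})$ bounds a disk in $X_{\beta,\beta'}$ (two disks when $s\cap s'\neq\emptyset$), and $\pi$ can meet its boundary only along $s$; since $a,b$ are punctures of $X_{\beta,\beta'}$ and lie outside the disk, $\pi$ crosses $s$ at least twice, and between two consecutive crossings $x,y$ one replaces $\pi[x,y]$ by the subsegment $\segment{xy}$ of $s$. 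The key geometric input --- straight segments with endpoints on the boundary cross pairwise at most once, so any segment entering the disk bounded by $\pi[x,y]$ and $\segment{xy}$ through $\segment{xy}$ must already cross $\pi[x,y]$ --- guarantees the set of crossed segments does not grow while the crossings with $s$ drop by two, yielding the contradiction. This is the mechanism your proposal lacks: your parity invariant correctly rules out odd/even mismatches within a cluster, but the even-crossing configurations (crossing $s$ twice and $s'$ never) carry zero parity information, and eliminating them requires exactly this geometric uncrossing, not crossing-number minimization in a fixed homotopy class.
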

\begin{proof}

\begin{figure}[t]
\centering
\includegraphics[width=0.9\textwidth]{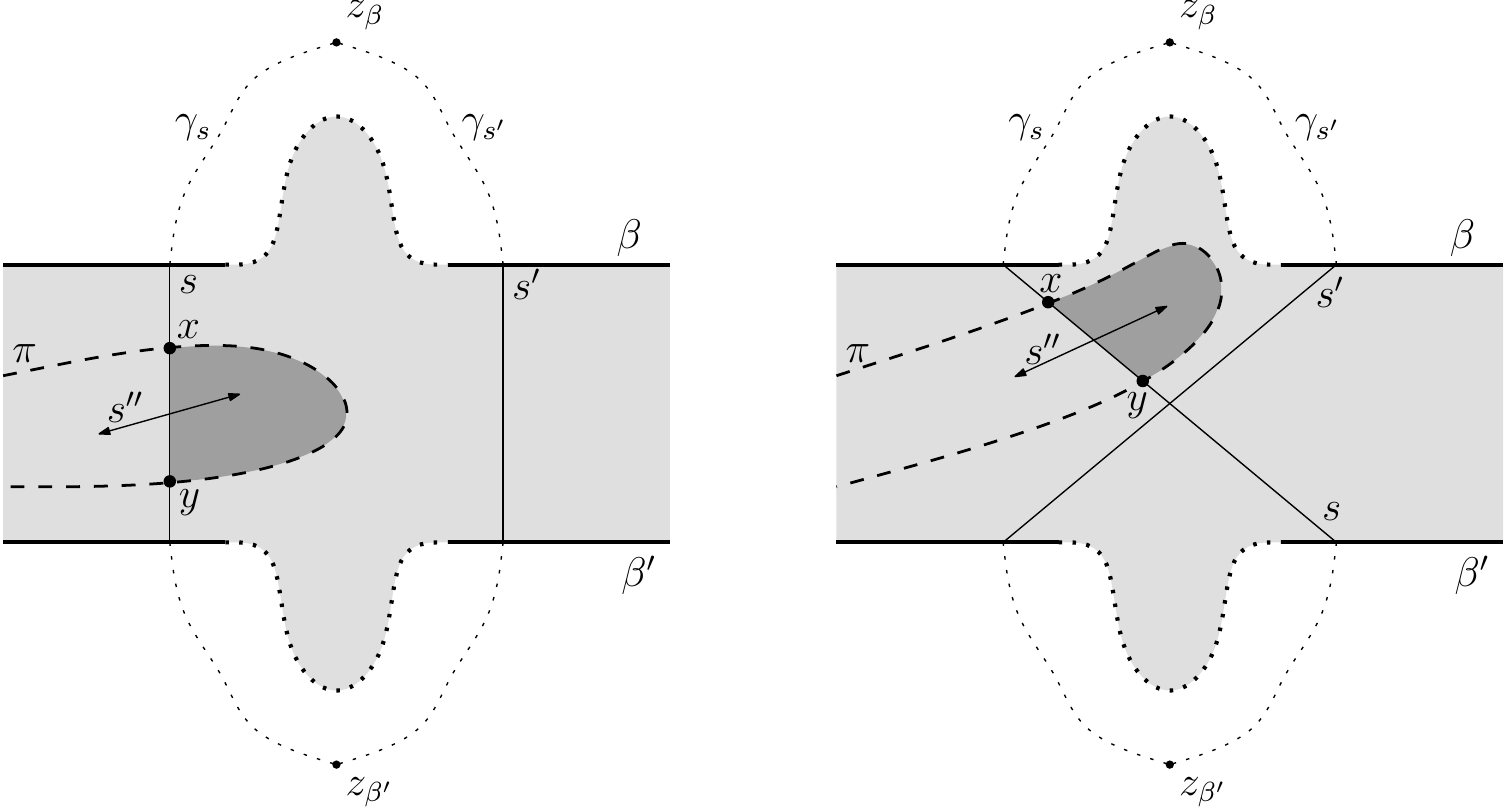}
\caption{Figure for the proof of Lemma~\ref{le:clusters2}. Left: case when $s$ and $s'$ are disjoint.
	Right: case when $s$ and $s'$ intersect. In both cases, the darker gray region represents the topological disk defined by $\pi[x,y]$ and $\segment{xy}$.}
\label{fig:polygon4}
\end{figure}

Let $s$ and $s'$ be two $a$-$b$ equivalent segments from $S_{\beta,\beta'}$.
This implies that $\gamma_s$ and $\gamma_{s'}$ are homotopic in $X_{\beta,\beta'}$.
Therefore, the path $\gamma$ obtained by concatenating $\gamma_s$ and the reversal of $\gamma_{s'}$ is contractible in $X_{\beta,\beta'}$.

Let $\pi$ be a minimum-cost path between $a$ and $b$ that crosses $s$ but does not cross $s'$.
We will reach a contradiction.
We take $\pi$ that minimizes the total number of crossings with $s$. 
We may assume that $\pi$ is simple and disjoint from $\beta,\beta'$.
We use $\pi[x,y]$ to denote the subpath of $\pi$ between points $x$ and $y$ of $\pi$. 
We distinguish two cases:
\begin{itemize}
	\item $s$ and $s'$ do not intersect. In this case, the curve $\gamma$ is simple and contractible
		in $X_{\beta,\beta'}$. It follows that $\gamma$ bounds a topological disk $D_\gamma$ in $X_{\beta,\beta'}$.
		By hypothesis, $\pi$ crosses the part of the boundary of $D_\gamma$ defined by $s$ but
		not $s'$. Therefore, $\pi$ must cross at least twice along $s$. Let $x$ and $y$ be two consecutive crossings of $\pi$ and $s$
		as we walk along $\pi$. See Fig.~\ref{fig:polygon4} left. Consider the path $\pi'$ that replaces $\pi[x,y]$ by the segment $\segment{xy}$.
		Any segment $s''$ crossing $s$ along $\segment{xy}$ crosses also $\pi$ because $\pi[x,y]$ and $\segment{xy}$ define a disk.
		Therefore $\pi'$ crosses no more segments than $\pi$ and crosses $s$ twice less than $\pi$.
		Thus, we reach a contradiction. (If $\pi'$ is not simple we can take a simple path contained in $\pi'$.)
	\item $s$ and $s'$ intersect. In this case, the curve $\gamma$ 
		in $X_{\beta,\beta'}$ has precisely one crossing. Let $\gamma'$ and $\gamma''$ be the two simple loops
		obtained by splitting $\gamma$ at its unique crossing. It must be that $\gamma'$ and $\gamma''$ are contractible,
		as otherwise $\gamma$ would not be contractible. See Fig.~\ref{fig:polygon4} right.
		Therefore, we obtain two topological disks $D_{\gamma'}$ and $D_{\gamma''}$, one bounded by $\gamma'$ and another by $\gamma''$.
		The path $\pi$ must cross the boundary of $D_{\gamma'}$ or $D_{\gamma''}$, and the same argument than in the previous item
		leads to a contradiction.\qedhere
\end{itemize}
\end{proof}

A minimum-cost $a$-$b$ path can now be found by testing all possible cluster subsets, that is, $2^{{\mathcal O}(h^4)}$ possibilities.

\begin{theorem}
The restricted {\sc $2$-Cells-Connection} problem in a polygon with $h$ holes and $n$ segments can be found in $2^{{\mathcal O}(h^4)}\polylog n$ time.
\end{theorem}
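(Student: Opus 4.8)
The plan is to combine the two cluster lemmas with a brute-force search over cluster subsets, and then account for the running time carefully. By Lemma~\ref{le:clusters2}, any minimum-cost $a$-$b$ path either crosses all segments of a cluster or none of them, so the optimal solution is determined by a choice, for each cluster, of whether that cluster is crossed. Thus it suffices to enumerate all subsets of clusters, and for each candidate subset check feasibility and compute the induced cost, returning the best. The correctness is then immediate from Lemma~\ref{le:clusters2}: the optimal solution corresponds to one of the enumerated subsets.

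First I would bound the total number of clusters. There are $h+1$ boundary components of $P$, hence $\mathcal{O}(h^2)$ unordered pairs $(\beta,\beta')$, and by Lemma~\ref{le:clusters1} each $S_{\beta,\beta'}$ splits into $\mathcal{O}(h^2)$ clusters. Therefore the total number of clusters over all pairs is $\mathcal{O}(h^2)\cdot\mathcal{O}(h^2)=\mathcal{O}(h^4)$. The number of subsets of clusters to test is $2^{\mathcal{O}(h^4)}$, which explains the leading factor in the stated time bound.

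Next I would describe the per-subset work. For a fixed subset $\mathcal{C}'$ of clusters, the path must cross exactly the segments in the clusters of $\mathcal{C}'$; equivalently, we must decide whether $a$ and $b$ lie in the same cell of the arrangement induced by the segments \emph{not} in $\mathcal{C}'$, while the path is also forced to avoid the boundary of $P$. This is a planar connectivity query that can be answered in $\polylog n$ time after appropriate preprocessing of the arrangement (for instance, via a point-location or dynamic-connectivity structure built once), and the cost of the candidate is simply the total number of single segments in the clusters of $\mathcal{C}'$, computed incrementally. Multiplying the $2^{\mathcal{O}(h^4)}$ candidates by the $\polylog n$ per-candidate cost, and adding the $\mathcal{O}(hn\log n)$ preprocessing from Lemma~\ref{le:clusters1} (which is absorbed into the stated bound), gives the overall $2^{\mathcal{O}(h^4)}\polylog n$ running time.

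The main obstacle I anticipate is not the enumeration itself but ensuring that the per-subset feasibility test genuinely runs in $\polylog n$ rather than in time proportional to the full arrangement complexity. The clusters were defined precisely so that membership in a cluster is a homotopy invariant captured by reduced crossing sequences (Lemma~\ref{le:clusters1}), so the key is to exploit this structure: rather than rebuilding the arrangement for each subset, one works in the quotient where each cluster is a single ``super-segment,'' reducing the connectivity query to a problem whose size depends only on $h$ and the number of clusters, not on $n$. Verifying that connectivity of $a$ and $b$ in the original domain is faithfully reflected in this quotient — and that the forbidden boundary of $P$ is handled correctly — is the delicate part; once this reduction is in place, the $\polylog n$ dependence on $n$ enters only through the preprocessing and point-location primitives, and the combinatorial explosion is confined to the $2^{\mathcal{O}(h^4)}$ factor.
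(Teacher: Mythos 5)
Your skeleton is exactly the paper's: $O(h^2)$ boundary pairs, $O(h^2)$ clusters per pair by Lemma~\ref{le:clusters1}, hence $O(h^4)$ clusters in total; enumerate all $2^{O(h^4)}$ cluster subsets, with correctness immediate from Lemma~\ref{le:clusters2}. The divergence --- and the genuine gap --- is the per-subset feasibility test. The paper does not attempt a $\polylog n$-time query: for each candidate subset it simply tests whether the complementary set of segments (together with $\partial P$) separates $a$ from $b$ by computing the single face of the induced arrangement containing $a$, using known algorithms that run in $O(n \polylog n)$ time per subset~\cite{single-face1,single-face2}. (Read literally, this makes the paper's total $2^{O(h^4)}\, n \polylog n$; the absence of the linear factor in the theorem statement is a slip of the paper, not something your argument repairs.)

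Your claimed $\polylog n$-per-subset test is unsubstantiated, and this is where the proposal fails as a proof. No off-the-shelf structure answers ``are $a$ and $b$ in the same cell of the arrangement induced by an arbitrary subset of the segments'' in polylogarithmic time; dynamic connectivity in segment arrangements is nowhere near $\polylog n$ per update, and moving between consecutive candidate subsets can change $\Theta(n)$ segments at once. Your fallback --- collapse each cluster to a single ``super-segment'' and decide connectivity in a quotient whose size depends only on $h$ --- is plausible in spirit, since the homotopy classes of the curves $\gamma_s$ in $X_{\beta,\beta'}$ do encode separation behaviour relative to $a$, $b$, and the holes, but it requires a proof you do not give: two segments in the same cluster may cross each other (this is precisely the second case in the proof of Lemma~\ref{le:clusters2}), clusters attached to different boundary pairs interleave geometrically, and one must show that retaining a single representative per kept cluster preserves the predicate ``the retained segments together with $\partial P$ separate $a$ from $b$.'' You flag this step yourself as the delicate part and leave it unverified, so as written it is a missing idea rather than an argument. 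A minor accounting slip besides: the preprocessing of Lemma~\ref{le:clusters1} is per boundary pair, so classifying all segments costs $O(h^3 n\log n)$ over the $O(h^2)$ pairs, as the paper's proof states.
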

\begin{proof}
We classify the segments of $S$ into ${\mathcal O}(h^4)$ clusters using Lemma~\ref{le:clusters1}. This takes ${\mathcal O}(h^3 n\log n)$ time. Because 
of Lemma~\ref{le:clusters2}, we know that either all or none of the segments in a cluster are crossed by an optimal $a$-$b$ path. Each subset of the clusters defines 
a set of segments $S'$, and we can test whether $S'$ separates $a$ and $b$ in ${\mathcal O}(n\polylog n)$ time~\cite{single-face1,single-face2}. 
%\complain{S: This should be doable in $O(n\log n)$, but do not see how. Do we actually want to claim a particular running time?}
\end{proof}

\section{Connecting all cells}\label{allcells}

We show that {\sc All-Cells-Connection} is NP-hard by a reduction from the NP-hard problem of feedback vertex set (FVS) in planar graphs (c.f.~\cite{V01}): Given a planar graph $G$, find a minimum-size set of vertices $X$ such that $G-X$ is acyclic. 

First, we subdivide every edge of $G$ obtaining a planar bipartite graph $G'$. It is clear that $G'$ has a feedback vertex set of size $k$ if and only if $G$ has one.  Next, we use the result by de Fraysseix et al.~\cite{dFMP91} (see also Hartman et al.~\cite{HNZ91}), 
which states that every planar bipartite graph is the intersection graph of horizontal and vertical segments, where no two of them cross (intersect at a common interior point). Let $S$ be the set of segments whose intersection graph is $G'$; it can be constructed in polynomial time. Since $G'$ has no triangles, no three segments of $S$ intersect at a point. Then, observe that all cells in $\mathcal{A}(S)$ become connected by removing $k$ segments if and only if $G'$ has a feedback vertex set of size $k$. Therefore we have:

\begin{theorem}
{\sc All-Cells-Connection} in NP-hard even if no three segments intersect at a point and there are no segment crossings.
\end{theorem}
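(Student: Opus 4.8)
The plan is to establish the biconditional ``all cells of $\mathcal{A}(S)$ become connected by removing $k$ segments if and only if $G'$ has a feedback vertex set of size $k$'' by relating the combinatorial structure of a segment arrangement to its intersection graph. First I would recall the setup: the preceding paragraph already reduced planar FVS to FVS in a planar \emph{bipartite} graph $G'$ (by subdividing edges, preserving FVS size), and then invoked the de Fraysseix--Hartman result to realize $G'$ as the intersection graph of a set $S$ of axis-parallel segments with no two crossing at a common interior point. Since $G'$ is triangle-free (being bipartite), no three segments of $S$ share a common point, so the arrangement $\mathcal{A}(S)$ has only the simplest possible vertices, each being an endpoint-on-segment contact of exactly two segments.

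The heart of the argument is a topological correspondence between bounded cells (faces) of $\mathcal{A}(S)$ and cycles in the intersection graph. First I would argue that removing a set $S'$ of segments merges all cells into one if and only if the arrangement $\mathcal{A}(S\setminus S')$ is \emph{simply connected}, i.e.\ has no bounded faces other than the single outer-complementary region. The key structural fact to verify is that $\mathcal{A}(S)$ contains a bounded cell precisely when $\I=G'$ contains a cycle: a minimal enclosed region of the plane is bounded by a cyclic sequence of segment-pieces, and walking around its boundary visits a closed sequence of segments each intersecting the next, which is exactly a cycle in the intersection graph; conversely, because the segments form a bounded planar configuration with no three concurrent, any cycle in $\I$ yields a closed curve enclosing area and hence a bounded cell. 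Thus the number of independent bounded cells equals the cycle rank of $\I$, and $\mathcal{A}(S)$ is a single cell exactly when $\I$ is a forest.

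With that equivalence in hand, the reduction closes cleanly: deleting the segments $S'$ deletes exactly the corresponding vertices from $\I=G'$, so $\mathcal{A}(S\setminus S')$ is a single cell if and only if $G'-S'$ is a forest, i.e.\ acyclic, i.e.\ $S'$ is a feedback vertex set of $G'$. Hence a size-$k$ set that connects all cells corresponds bijectively to a size-$k$ FVS of $G'$, and since $S$ is constructible in polynomial time with polynomially bounded coordinates, NP-hardness of {\sc All-Cells-Connection} follows from the NP-hardness of planar FVS. The claim about ``no three segments intersecting at a point and no segment crossings'' is immediate, since the construction uses only horizontal/vertical segments meeting at isolated contact points.

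The main obstacle I expect is making the faces-versus-cycles correspondence fully rigorous, in particular the direction that \emph{every} cycle in $\I$ forces a genuine bounded cell. One has to rule out degeneracies where a cycle of the intersection graph fails to enclose any region (for instance if the cyclic incidences are arranged so the corresponding closed walk in the plane backtracks without bounding area). Here I would lean on the specific geometry of the de Fraysseix--Hartman representation---axis-parallel segments with clean transversal contacts and no three concurrent---to guarantee that a chordless cycle in $\I$ traces a simple closed polygonal curve that bounds a topological disk, and then reduce the general case to chordless cycles via the cycle space. This is the step where the structure of the segment realization, rather than mere abstract graph theory, is genuinely needed.
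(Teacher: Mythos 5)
Your proposal follows exactly the paper's reduction---subdividing planar FVS to a planar bipartite graph $G'$, invoking the de Fraysseix et al.\ contact representation by non-crossing horizontal/vertical segments, and identifying size-$k$ segment removals that merge all cells with size-$k$ feedback vertex sets of $G'$---so it is essentially the same proof. The paper states the cells-versus-cycles equivalence as a bare observation, and your elaboration of it (a chordless cycle yields a simple closed rectilinear curve by the no-crossing, no-three-concurrent, axis-parallel structure, hence a bounded cell; conversely a forest of contacts cannot enclose area) is a correct filling-in of details the paper omits.
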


It is also easy to see that if no three segments intersect at a point a $k$-size solution to {\sc All-Cells-Connection}
corresponds to a $k$-size solution of FVS in the intersection graph of the input segments. 
For general graphs, FVS is fixed-parameter tractable when parameterized with the size of the solution~\cite{CFLLV08}, and has a polynomial-time $2$-approximation algorithm~\cite{V01}. We thus obtain the following:

\begin{corollary}
When no three segments intersect at a point, {\sc All-Cells-Connection} is fixed-parameter tractable with respect to the size of the solution and has a polynomial-time $2$-approximation algorithm.
\end{corollary}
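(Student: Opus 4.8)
The plan is to derive the corollary from known feedback-vertex-set (FVS) results by first making precise the correspondence asserted just above the statement: under the hypothesis that no three segments meet at a point, deleting a set of segments so that the arrangement becomes a single cell is the same as hitting every cycle of the intersection graph $\I$. Concretely, I would prove that for every subset $S'\subseteq S$ the arrangement $\mathcal{A}(S\setminus S')$ consists of a single cell if and only if the induced subgraph $\I - S'$ on $S\setminus S'$ is a forest. Granting this, a minimum-size all-cells solution is exactly a minimum feedback vertex set of $\I$; more strongly, every feedback vertex set of $\I$ is a valid solution of the same size and conversely. Since $\I$ is constructible in polynomial time, this is a polynomial-time, solution-size-preserving reduction in both directions.

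The substantive step is the topological equivalence. The cells of $\mathcal{A}(S)$ are the connected components of $\Reals^2\setminus \bigcup S$, so the arrangement is a single cell precisely when $\bigcup S$ bounds no region; by Euler's formula the number of bounded complementary regions of a planar $1$-complex equals its first Betti number, so the single-cell condition is $b_1(\bigcup S)=0$. I would then relate $b_1(\bigcup S)$ to $\I$. Cover $\bigcup S$ by the segments themselves: each segment is contractible, each pairwise intersection $s\cap s'$ is a single point (contractible), and the hypothesis that no three segments share a point means there are no nonempty triple intersections. Hence the nerve of this cover is exactly $\I$, and by the nerve lemma $\bigcup S$ is homotopy equivalent to $\I$; alternatively one reaches the same conclusion by an elementary induction, adding the segments one at a time and tracking how each new segment, glued at its intersection points, changes the cycle rank. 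Either way $b_1(\bigcup S)=b_1(\I)$, so $\mathcal{A}(S)$ is a single cell iff $\I$ is acyclic. Replacing $S$ by $S\setminus S'$ and $\I$ by $\I - S'$ gives the relativized statement, since the no-three-at-a-point property is inherited by subsets.

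With the reduction in hand, both claims follow by transfer. For fixed-parameter tractability, I would run the FPT feedback-vertex-set algorithm of~\cite{CFLLV08} on $\I$ with parameter $k$ equal to the target solution size: it returns a feedback vertex set of size $k$ exactly when an all-cells solution of size $k$ exists, and that set is itself such a solution. For approximation, I would run the polynomial-time $2$-approximation of~\cite{V01} on $\I$; its output is a valid all-cells solution whose size is within a factor $2$ of the optimum, because the optima of the two problems coincide.

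The main obstacle is the topological equivalence of the second paragraph, and in particular seeing clearly why the no-three-segments-at-a-point hypothesis is exactly what makes it work: it guarantees that every intersection involves only two segments, so each intersection contributes a single edge of $\I$ and the nerve is one-dimensional. If three or more segments were allowed to meet at a point, that point would glue several segments together simultaneously, the identification $b_1(\bigcup S)=b_1(\I)$ would break, and the reduction to feedback vertex set would fail; this is consistent with the fact that general {\sc All-Cells-Connection} is not captured by FVS in this way.
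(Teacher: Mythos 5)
Your proposal is correct and follows essentially the same route as the paper: the paper likewise reduces {\sc All-Cells-Connection} (under the no-three-segments-at-a-point hypothesis) to feedback vertex set in the intersection graph $\I$ -- asserting the size-$k$ correspondence as ``easy to see'' -- and then invokes the FPT algorithm of~\cite{CFLLV08} and the $2$-approximation of~\cite{V01}. Your nerve-lemma/Euler-characteristic argument (single cell iff $b_1(\bigcup S)=b_1(\I)=0$) simply makes rigorous the equivalence the paper leaves implicit, and correctly identifies why the hypothesis is exactly what is needed.
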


%%%%%%%%%%%%%%%%%%%%%%%%%%%%%%%%%%%%%%%%%%%%%%%%%%%%%%%%%%%%%%%%%%%%%%%%%%%%%%%%%%%%%%%%%%%%%%%%%%%%%%%%%%%%%%%%%%%%%%%%%%%%%%%%%%%%%%%%%%%%%%%%%%%%%%%%%%%%%%%%%%%%%%%%%%%%%%%%%%%%%%%%%%%%%%%%%%%%%%%%%%%%%%%%%%%%%%%%%%%%%%%%%%%%%%%%%%%%%%%%%%%%

\section*{Acknowledgments:}
We would like to thank Primo{\v z} {\v S}kraba
 for bringing to our attention some of the problems studied in this abstract.

\bibliographystyle{alpha}
\bibliography{segments_full}

\end{document}